\documentclass[11pt]{article} 

\usepackage{amsmath,amsthm,latexsym,amssymb,amsfonts,epsfig}


\oddsidemargin -0.7cm
\topmargin -2.5cm
\textwidth 18cm  
\textheight 25cm  




\setcounter{secnumdepth}{7}
\setcounter{tocdepth}{7}
\newtheorem{Theorem}{Theorem}[section]
\newtheorem{Definition}{Definition}[section]

\newcommand{\be}{\begin{equation}}
\newcommand{\ee}{\end{equation}}
\newcommand{\ba}{\begin{eqnarray}}
\newcommand{\ea}{\end{eqnarray}}

\title{{\sf Exact quantisation of U(1)$^3$ quantum gravity}\\
{\sf via exponentiation of the hypersurface deformation algebroid}} 
\author{
{\sf T. Thiemann}$^1$\thanks{{\sf 
thomas.thiemann@gravity.fau.de}}\\
\\
{\sf $^1$ Inst. for Quantum Gravity, FAU Erlangen -- N\"urnberg,}\\
{\sf Staudtstr. 7, 91058 Erlangen, Germany}\\
}
\date{{\small\sf \today}}

\makeatletter
\@addtoreset{equation}{section}
\makeatother

\begin{document} 

\maketitle

{\sf

\begin{abstract}
The U(1)$^3$ model for 3+1 Euclidian signature general relativity 
is an interacting, generally covariant field theory with two physical
polarisations that shares many features of Lorentzian general relativity. 
In particular, it displays a non-trivial realisation of the hypersurface 
deformation algebroid with non-trivial, i.e. phase space dependent 
structure functions rather than structure constants.

In this paper we show that the model admits {\it an exact quantisation}.
The quantisation rests on the observation that for this model and in the 
chosen representation of the canonical commutation relations 
the density unity 
hypersurface algebra {\it can be exponentiated on non-degenerate 
states}. These are states that represent a non-degenerate quantum 
metric and from a classical perspective are the relevant states 
on which the hypersurface algebra is representable. 

The representation of the algebra is exact, with no ambiguities
involved and anomaly free. The quantum constraints can be exactly solved 
using {\it groupoid averaging} 
and the solutions admit a Hilbert space structure that agrees with the 
quantisation of a recently found reduced phase space formulation. 
Using the also recently found covariant action for that model, we 
start a path integral or spin foam formulation which, due to the 
Abelian character of the gauge group, is much simpler than for 
Lorentzian signature general relativity and provides an ideal testing 
ground for general spin foam models.  

The solution of U(1)$^3$ quantum gravity communicated in this paper 
motivates an entirely new approach to the implementation of the 
Hamiltonian constraint in quantum gravity. 
\end{abstract}

\section{Introduction}
\label{s1}

The initial value formulation of General Relativity (GR) \cite{1} is the 
starting point for both numerical GR \cite{2} producing black hole merger 
templates of ever increasing accuracy \cite{3} and canonical quantisation 
\cite{4}. A central ingredient in this approach to the dynamics of both 
classical and quantum gravity are the initial value constraints, known 
as the spatial diffeomorphism and Hamiltonian constraints. With respect 
to an ADM foliation \cite{5} of spacetime they play a dual role first as 
temporal-spatial and temporal-temporal projections of the Einstein equations 
and second as generator gauge transformations as well as 
dynamical equations for the two physical polarisations 
of the spacetime metric. 

These constraints form a closed algebroid \cite{6} under Poisson brackets known
as the {\it hypersurface deformation algebroid} \cite{7}. The term algebroid 
rather than algebra emphasises the fact that in contrast to a Poisson 
Lie algebra the constraints do not close with structure constants but 
with structure functions that have a non-trivial dependence on the phase 
space point that one is considering. Thus, multiple Poisson brackets 
produce always a linear combination of constraints, however, the coefficients 
of these linear combinations are functions on the phase space which 
become more and more complictated the more Poisson brackets one computes.
This is in contrast to 1-parameter groups of spacetime diffeormorphisms 
whose generating vector fields do form a true algebra and not just an 
algebroid. Indeed it is known that the 1-1 correspondence between 
the canonical constraint algebroid and the spacetime diffeomorphism algebra 
holds only ``on shell'' i.e. when the Einstein equations are satified.

The fact that the constraint algebra of GR is an algebroid and not an algebra 
is one of the many reasons why it continues to be so difficult to construct a 
theory of quantum gravity. In recent decades some progress has been made 
using the connection formulation \cite{8} and gave rise to a quantisation 
programme coined Loop Quantum Gravity (LQG) \cite{9}. The name 
arose because of the similarity of LQG to lattice gauge theory \cite{10} 
in which 
gauge covariant Wilson loop functionals play a fundamental role. While 
it has been possible to define the quantum constraint on a common, dense, 
invariant domain 
\cite{11} of a rigorously defined Hilbert space representation of the 
canonical commutation and $\ast$ relations \cite{12} and while the 
corresponding quantum algebroid indeed closes and in that sense is 
{\it mathematically consistent}, it closes with the 
wrong quantum structure functions \cite{13}, thereby exhibiting 
a {\it physical anomaly}. To improve this, there have been
at least four suggestions: In the master constraint \cite{14}
approach structure 
functions are omitted altogether by a classical equivalent reformulation
of the constraints in terms of a single constraint. In the reduced phase space
approach \cite{15} one solves the constraints classically and thus avoids 
the issue of quantum structure functions. In \cite{20} one uses classically
equivalent constraints which are linear combinations of the original ones 
with phase space dependent coefficients (``electric shifts'' and non-standard 
density weight for the Hamiltonian constraint)  
and tries to define those 
on a space of distributions over a dense subspace of the Hilbert space       
rather than that dense subspace itself. Finally in \cite{17} one uses 
renormalisation techniques to define a renormalisation group flow of structure 
operators whose fixed point should be the physically correct ones.

These four programmes have been tested in various model situations, see e.g. 
\cite{18,19,20,21} and references therein. Here we wish to focus on two 
models, parametrised 
field theory (PFT) in two spacetime dimensions \cite{22} and the 
U(1)$^3$ model for 3+1 Euclidian GR \cite{23}. Both models are much simpler 
than GR but still share with GR the fact that they exhibit a non-trivial 
hypersurface deformation algebra. The constraints of PFT are yet 
much simpler than those of the U(1)$^3$ theory in that in their  
density weight two version they form a true (centrally extended)
algebra rather than an 
algebroid while this is no longer the case for the U(1)$^3$ model 
which is therefore a much better testing ground.
  
The application of \cite{11} to PFT
can be found in \cite{24}, the application of \cite{20} to PFT in \cite{25}
and the application of \cite{17} to PFT in \cite{26}.
The application of \cite{15} to U(1)$^3$
can be found in \cite{27}, the application of \cite{20} to 
U(1)$^3$ in \cite{28}. The work \cite{26} shows that in PFT 
it is possible to find the correct anomaly free fixed point
algebra using renormalisation and the work \cite{30} that one 
can modify the constraints of PFT to have non-trivial structure functions 
with density weight unity constraints which still form a closed quantum 
algebroid on a space of distributions {\it which are non-degenerate} 
in the sense that the quantum metric is diagonal with non degenerate 
eigenvalues. The work \cite{27} shows that the non standard 
density weight is motivated by a space of distributions representing 
degenerate quantum metrics. Note that the work \cite{28} is purely 
classical.

In this paper we show that the U(1)$^3$ model for 3+1 Euclidian GR,
or U(1)$^3$ quantum gravity for short, can 
be solved {\it exactly} following analogous steps as in LQG. What makes this 
possible is the fact that, while the constraints of the 
U(1)$^3$ model display exactly the same algebraic stucture as those for 
3+1 Euclidian gravity, the latter is for the non-Abelian 
gauge group SU(2) rather than the Abelian U(1)$^3$. This has the consequence 
that all constraints are {\it at most linear in the connection rather than 
quadratic}. While this does not turn U(1)$^3$ into a free or topological 
theory, it is in fact highly interacting and displays two propagating 
polarisations, the linearity in the connection makes it possible 
to {\it exponentiate the hypersurface deformation algebroid}. 

In LQG 
one already exponentiated the spatial diffeomorphism constraints. This 
is relatively straightforward as 
those constraints form a closed and true sub Lie algebra of the 
hypersurface deformation algebra. In LQG the spatial diffeomorphisms 
act by unitary operators but 1-parameter subgroups do not act weakly 
continuously. Similarly in this paper we show that the 
exponentiated Hamiltonian constraints act by unitary operators, but 
1-parameter ``subgroupoids'' do not act weakly continuously. Just as 
in the classical theory, while the composition of exponentiated 
quantum Hamiltonian constraint actions can be computed in a form as closed 
as in the classical theory commutators thereof cannot be written 
as an action of an exponentiated spatial diffeomorphism due to the 
structure operators involved. Yet, the resulting expression is precisely
the action of the corresponding classically exponentiated constraints and 
thus is represented without anomaly, directly on Hilbert space, without 
considering any dual spaces and without any ambiguities.
It is on dual spaces that one can compute 
infinitesimal actions and these 
can also been shown to be anomaly free, giving a representation of the 
hypersurface deformation algebroid, without any ambiguities.

The Hilbert space representation considered in this paper for U(1)$^3$ 
theory is similar to but different from the LQG representation. 
It is based on generalised holonomies of the connection. The generalisation
consists in modifying all three ingredients of an LQG spin network function:
Graphs, spins and intertwiners are replaced by divergence free smearing 
functions. This can considered as ``thickening'' the graph edges, to allow 
real valued rather than half integral spin quantum numbers on the edges 
and to 
take care of the Abelian nature of the gauge group by replacing 
invariant intertwiners at vertices by the divergence free condition.

This more general state space is precisely what allows the exponentiation 
of the Hamiltonian constraint. Just as for the spatial diffeomorphism 
constraint, that action makes use of the classical exponentiation, i.e.
the Hamiltonian flow of the corresponding Hamiltonian vector field 
and that flow preserves a suitable space of {\it non-degenerate}
smearing functions.
That flow can be worked out as usual by Taylor expansion and as expected 
acts {\it highly non-linearly} on the space of smearing functions. This 
is in sharp contrast to the exponentiated spatial diffeomorphism constraint
which has a linear action. In fact, for the U(1)$^3$ model the exponentiated 
Hamiltonian constraint flow can be defined {\it for any density weight} 
of the Hamiltonian constraint including density weight one. Then 
the action is not polynomial in the smearing function, not even when 
truncating the Taylor series to finite order. However, no matter 
how non-polynomial that action is, it maps the space of divergence free, 
density weight unity vector fields onto themselves. To the best of knowledge 
of the author, such a non-polynomial representation of the hypersurface 
deformation algebroid has not been discussed previously in the literature 
and it is interesting to see how the Hamiltonian constraint in fact naturally
generates it without any guess work about loop attachments ever necessary.\\
\\
The lessons to be learnt from the present paper for actual GR, in the 
opinion of the author, are as follows:\\
1.\\
The present exposition once again stresses the importance of the quantum 
non-degeneracy condition for a faithful representation of the 
hypersurface deformation algebroid \cite{30}.\\
2.\\
The model shows that there is no obstacle in using the natural density unity 
form of the Hamiltonian constraint. As shown in \cite{30} density 
unity is enforced as soon as one considers as a cosmological constant 
or additional matter terms in GR.\\
3.\\
The model shows that one can obtain anomaly free closure of the constraint
algebra directly on the kinematical Hilbert space without invoking 
dual spaces, in particular the closure is off-shell.\\ 
\\
The architecture of this paper is as follows:\\
\\

In section \ref{s2} we briefly outline the classical description of the 
U(1)$^3$ model both in its covariant and canonical formulation 
\cite{27}. 

Section \ref{s3} is the key section of the present paper. We show that 
any classical constraint linear in momentum $p$ admits a unitary representation
in a Hilbert space representation of the CCR and the $\ast$ relations 
based on a cyclic vacuum $\Omega$ for the Weyl operators depending only 
on $p$ which is annihilated by the conjugate variable $q$. This 
holds no matter how non-linearly the dependence of the constraint on $q$ maybe.
In particular, this allwos to extend U(1)$^3$ QG by a quantum cosmological 
constant. The resulting action of the constraints is free of any 
(ordering) ambiguities.
 
In section \ref{s4} we define such a representation of the CCR and $\ast$ 
relations for the U(1)$^3$ model and apply the theorem to the spatial 
diffeormorphism and Hamiltonian constraint. We compute explicitly the 
first few terms of the Taylor expasion mentioned above and 
demonstrate unitarity, and anomaly freeness of the exponentiated 
constraints.

In section \ref{s5} we compute the dual action of the constraints on 
suitable distributions and verify anomaly freeness of the algebroid.

In section \ref{s6} we solve the quantum constraints by groupoid averaging 
and demonstrate that we arrive precisely at a reduced phase space 
quantisation of the reduced phase space description of \cite{27}. 
In particular we can construct the unitary 1-parameter group generated 
by the physical Hamiltonian. The resulting theory is a kind of 
self-interacting, non-polynomial quantum electrodynamics with two 
propagating polarisations.  

In section \ref{s8} we construct new non-relational, weak Dirac
observables which 
are not related to any gauge fixing condition, both classically and
quantum mechanically. We establish that they 
weakly commute with all quantum constraints.

In section \ref{s7} we develop a path integral formulation of the 
reduced  description of the model that by construction 
is equivalent to the canonical operator theory. An interesting aspect 
is that instead of the usual heuristic undefined Lebesgue measure
expressions, the systematic derivation yields automatically Bohr and 
discrete measures instead.  

In section \ref{s9} we unfold the reduced phase space path integral 
and arrive at a covariant formulation of the rigging map (``projector''
into the constraint kernel)
which depends on the 
exponent of the classical action. 
This is interesting because 
the rigging map is difficult to construct when the constraints 
do not form a Lie algebra \cite{31}. 
We transform that path integral 
over U(1)$^3$ connections and tetrads into a path integrals over 
connections and a $B$ field ($BF$ formulation) which is subject to 
U(1)$^3$ simplicity constraints. This reformulation is 
the starting point 
for a systematic spin foam treatment which should be 
much simpler in this Abelian setting. The fact that this 
model receives a substantial 
amount of guidance from the canonical treatment layed out in this paper
may help to deepen the bridge between canonical and covariant LQG. 
The details of the spin foam formulation of U(1)$^3$ can be worked 
out using the results of the present paper and are reserved for 
future research.

In section \ref{s10} we summarise and conclude. In particular we compare 
the action of the Hamiltonian constraint in the usual LQG representation
with that of the present paper. This is possible because one can 
mollify the form factors of the ususal formulation to arrive at 
form factors of the present formulation. Among other things, 
a main difference between the two actions is that the lapse function
in the present setting becomes {\it part of the form factor in a 
non-linear fashion}, i.e. it is not simply a coefficient in an expansion
of spin (or charge) network functions.

\section{Classical U(1)$^3$ theory}
\label{s2}

Our exposition will be minimal. The details can be found in 
\cite{23,27,28}.\\
\\
A possible action has recently been found in \cite{27}
\be \label{2.1}
\int_{M}\; d^X\; \det(\{e_\rho^K\})\; F_{\mu\nu}^{IJ}\; e^\mu_I\; e^\nu_J
\ee
with spacetime tensor indices $\mu,\nu,\rho=0,1,2,3$ and frame indices 
$I,J,K=0,1,2,3$. The field $e_\mu^I$ is a co-tetrad with inverse 
$e^\mu_I$. The field 
$F_{\mu\nu}^{IJ}$ is not the curvature of the Palatini formulation of 
Lorentzian vacuum GR but rather is constrained by 
\be \label{2.2}
F_{\mu\nu}^{0j}=\frac{1}{2}\delta^{jm}\;
\epsilon_{mkl}\; F_{\mu\nu}^{kl}=:F_{\mu\nu}^j=2\partial_{[\mu} A^j_{\nu]} 
\ee
where $A^j_\mu$ is a spacetime U(1)$^3$ connection with 
$j,k,k=1,2,3$.. If we would 
add to $F_{\mu\nu}^j$ the quadratic term 
$\epsilon^{jkl}\;\delta_{km}\;\delta_{ln}\; A_\mu^m\; A_nu^n$ we would 
obtain precisely the selfdual formulation of Euclidian signature GR
\cite{8} and one would pass from the Abelian group U(1)$^3$ to the 
non-Abelian group SU(2). 
The action (\ref{2.1}) could be generalised by an Immirzi
parameter \cite{32,27} but for the purpose of this paper (\ref{2.1}) will 
be sufficient.

A careful Dirac constraint analysis of (\ref{2.1}) 
reveals \cite{27} that the 3+1 split $M\cong \mathbb{R}\times \sigma$ of the 
action, after getting rid of second class constraints, exhibits the 
following canonical data:\\
There are conjugate pairs $(A_a^j,E^a_j)$ with $a,b,c=1,2,3$ and constraints
\ba\label{2.3}
G_j &=& \partial_a\; E^a_j 
\nonumber\\
D_a &=& F_{ab}\; E^b_j-\; A_a^j G_j   
\nonumber\\
C &=& F_{ab}^j \; \epsilon_{jkl}\; E^a_m\; E^b_n\; \delta^{km}\; \delta^{ln}
\; |\det(E)|^{-1/2}
\ea
which are produced in the Dirac analysis directly in the natural density 
unity form. The constraints generate U(1)$^3$ gauge transformations, 
spatial diffeomorphisms and Hamiltonian transformations respectively. Their 
smeared versions 
\be \label{2.4}
G[s]:=\int_\sigma\; d^3x\; s^j\; G_j,\;\; 
D[u]:=\int_\sigma\; d^3x\; u^a\; D_a,\;\;
C[M]:=\int_\sigma\; d^3x\; M\; C,\;\; 
\ee
yields the Poisson algebra
\ba \label{2.5}
&& \{G[s],G[t]\}=0,\;\;\{D[u],G[s]\}=-G[u[s]],\;\{C[N],G[s]\}=0 
\nonumber\\
&& \{D[u],D[v]\}=-D[[u,v]],\; \{D[u],C[M]\}=-C[u[M]]
\nonumber\\
&& \{C[M],C[N]\}=-D[q^{-1}[M\; dN-N\; dM]
\ea
which is computed using the fundamental Poisson brackets
\be \label{2.6}
\{E^a_j(x),A_b^k(y)\}=\delta^a_b\; 
\delta_j^k\; \delta(x,y)
\ee
and where $s^j,M$ are scalars, $u$ is a vector field, $u[s], u[M]$ is 
the action of a vector field on scalars and $[u,v]$ the commutator vector 
field. Finally and most importantly for the present paper the 
metric field 
\be \label{2.7}
q_{ab}:=e_a^j e_a^k\delta_{jk},\;E^a_j:=\sqrt{\det(q)}\; q^{ab} e_b^j
\ee
is a spatial metric derived from the field $e_a^j$ which appears 
in the action (\ref{2.1}), i.e. $e_\mu^I$ for $\mu=a,\; I=j$. 

The four last 
relations in (\ref{2.5}) show that $D[u],C[M]$ represent the 
{\it hypersurface deformation algebroid (HDA)} $\mathfrak{h}$ \cite{7}.
The algebra does not close with the structure constants (i.e. smearing 
functions are independent of $A,E$) but with structure functions 
encoded in the inverse $(q^{-1})^{ab}=:q^{ab},\; q^{ac}\; q_{cb}=\delta^a_b$
of the metric. The whole chain of relations (\ref{2.7}) and the appearance 
of $q^{-1}$ make it transparent that the very formulation of $\mathfrak{h}$ 
{\it assumes that the metric non-degenerate}, i.e. nowhere singular. 
This is of course a necessary condition of classical GR as otherwise 
signature, curvature etc. would not be well defined on the whole spacetime 
manifold. The classical hypersurface deformation algebra therefore reminds us
of this basic but important {\it non-degeneracy} condition. Also, when 
going through the details of the classical calculation that leads to 
(\ref{2.5}), one makes use of that non-degeneracy in every single 
step of the calculation. One may argue that for the density weight two 
version of the Hamiltonian constraint 
the classical non-degeneracy condition is not necessary. However, as shown 
in \cite{30} only density weight unity is possible in Lorentzian vacuum quantum 
GR or even Euclidan quantum GR with a cosmological constant. As shown 
in \cite{11}, density weight unity is also consistent with any known 
matter coupling.    

Now in LQG \cite{9} this non-degeneracy condition is dealt with 
as follows: The state space of LQG is the closed  
span of (so called spin network) 
functions that exhibit excitations of the quantum metric only 
on the edges of finite graphs, therefore a typical LQG state is actually 
degenerate 
almost everywhere. Nevertheless, 
operators corresponding to $q^{ab}$ that appear in 
(\ref{2.5}) must be densely defined on such states. This is indeed possible 
\cite{11} exploiting singular properties of the LQG volume operator 
\cite{33}. Obviously that operator must then vanish where the metric is 
not excited, i.e. the inverse quantum metric vanishes at zero quantum metric 
while the inverse classical metric diverges at zero classical metric. 
While this comes out somewhat naturally from the LQG framework, it has the 
following far reaching consequence: The last Poisson bracket relation 
in (\ref{2.5}) relies on the fact that the inverse classical metric is 
nowhere vanishing. Would it be non-vanishing almost everywhere then the 
right hand side of that last relation would actually vanish becausse
the Riemann/Lebesgue integral does not ``see'' this set. This 
is precisely what one observes when one computes the dual action of 
the operators $C[M]$ on suitable spaces of distributions \cite{34}.
While this behaviour obviously depends on the selected space of distributions 
and while the commutator on the LQG Hilbert space itself does not vanish
\cite{11} it is clear that a proper representation of $\mathfrak{h}$ in 
the quantum theory must be subject to some kind of {\it quantum
non-degeneracy} in order to circumvent these difficulties if one 
does not want to change the density weight of the Hamiltonian constraint 
\cite{20} which is forbidden by cosmological constant, 
Lorentzian rather than Euclidian signature vacuum terms and matter \cite{30}.
In \cite{30} we show that there exists a non-trivial non-degenerate
sector in the standard LQG Hilbert space, but those states obviously are
no longer finite linear combinations of spin network functions but 
rather live in their closure defining a new but very 
complicated domain for the constraint 
operators. For the U(1)$^3$ model a much simpler solution exists which 
will be presented in the next section.

\section{Quantum U(1)$^3$ theory} 
\label{s3}

We will use some of the properties of LQG but modify them in some important 
details. \\
\\
We first define a Hilbert space representation of the CCR and $^\ast$ 
relations (we work in units with $\hbar=1$ and drop an analog of 
Newton's constant)
\be \label{3.1}
[E^a_j(x),E^b_k(y)]=   
[A_a^j(x),A_b^k(y)]=0,\;
[E^a_j(x),A_b^k(y)]=\delta^a_b\;\delta_j^k\;\delta(x,y),\;
[E^a_j(x)]^\ast-E^a_j(x)=   
[A_a^j(x)]^\ast-A_a^j(x)=0
\ee
Abusing the notation we will denote the representing operators and 
abstract algebra elements (\ref{3.1}) by the same symbol.

The representation is based on a cyclic vector $\Omega$ which is a 
{\it vacuum} 
for the electric field $E^a_j(x)$, that is 
\be \label{3.2}
E^a_j(x)\;\Omega=0
\ee
We will excite the vacuum by {\it Weyl elements}
\be \label{3.3}
w[F]:=\exp(-i\int_\sigma\; d^3x \; F^a_j(x) \; A_a^j(x))
\ee
As a consequence of (\ref{3.1}), (\ref{3.2}) and (\ref{3.3}) 
(more precisely the corresponding Weyl 
relations induced by them)
the excited states
$w[F]\Omega$ are simulatenous eigenstates of the operator valued 
distributions $E^a_j(x)$
\be \label{3.4}
E^a_j(x)\; w[F]\;\Omega=F^a_j(x)\;w[F]\Omega
\ee
and we necessarily have \cite{30}
\be \label{3.5}
<\Omega,\; w[F]\Omega>=\delta_{F,0}
\ee
where $\delta_{F,0}$ is indeed non-vanishing (namely unity) if and only if 
$F^a_j(x)\equiv 0$ thereby displaying discontinuity of the 
excited states and non-separability of this Hilbert space of 
Narnhofer-Thirring type \cite{34}. Together with the Weyl relations 
$w[F]^\ast w[F']=w[F'-F]$ the Hilbert space is the closure of the span of 
these excited states equipped with the inner product induced from (\ref{3.5}).
We see that in this representation it is quite easy to find everywhere 
non-degenerate states because by the spectral theorem 
\be \label{3.6}
q_{ab}(x)\;w[F]\Omega=q^F_{ab}(x) \; w[F]\Omega,\;\;\;
q_{ab}^F=\frac{1}{2|\det(F)|}
\epsilon_{acd}\epsilon_{def}  F^c_j F^e_k F^d_m F^f_n\; 
\delta^{jm}\;\delta^{kn}
\ee
thus as expected 
the eigenvalue $q^F_{ab}$ is everywhere well defined if $\det(F)$ is 
nowhere vanishing. 

At this point one might wonder why one does not proceed the same way in LQG
and chooses to work with the much more complicated space of spin network 
states. There are actually two reasons. One is gauge invariance, the 
other is the fact that the constraints of LQG depend quadratically 
on $A$ and not only linearly. Concerning gauge invariance, in order 
to solve the non-Abelian Gauss constraint one uses non-Abelian holonomies.
It would be very difficult to solve the Gauss constraint using the Weyl
elements (\ref{3.3}). Concerning the quadratic dependence of the 
constraints on the connection we note that the minimal smearing dimension 
of a field is dictated by the dynamics \cite{30}. For the Hamiltonian 
constraint of U(1)$^3$ theory, if we smear $A$ in 3d as in (\ref{3.3}) 
then the electric field dependence is diagonal and the action 
of $C$ is roughly by multiplying a Weyl element 
by a functional linear in $A$ smeared 
in 3d. It thus has the same form as the exponent of $w[F]$ 
and thus has a chance 
to be writable as (limits) of multiplication operators acting on 
the $w[F]\Omega$. If we did the same smearing for the Hamiltonian 
constraint of Euclidian GR we would again get a diagonal electric
field dependence but now the resulting function is not of the 
form of a linear functional of $A$ smeared in 3d but rather a quadratic 
expression in $A$ smeared in 3d. Thus to find a quadratic functional
of $A$ smeared in as many dimensions as the exponent of 
$w[F]$ we must lower the smearing 
dimension of $w[F]$ to $1\le k\le 2$ 
which means that $E$ now no longer multiplies by a function but by 
a $\delta$ distribution in $3-k$ dimensions. In order that the cosmological
term also be well defined with a density weight $\delta$ Hamiltonian 
constraint the unique choice is $w=k=1$ \cite{30}.
The choice $k=1$ is the natural choice taken in LQG from the perspective 
of gauge covariance as the holonomy is anyway
along a 1-dimensional curve.

However, in U(1)$^3$ we have the luxury to use smearing dimension 3 for 
$w[F]$ and thus also $C$ can be considered with any density weight, including 
unity. The density weight $\delta$ valued Hamiltonian constraint is defined 
by 
\be \label{3.6a}
C_\delta=C\; |\det(E)|^{(\delta-1)/2}=
\frac{\epsilon_{abc}\epsilon^{jkl}\; B^a_j\; E^b_k\; E^c_l}
{|\det(E)|^{(2-\delta)/2}}         
\ee
where $B^a=\epsilon^{abc} F_{bc}/2$ is the magnetic field of $A$.

We start by quantising the Gauss constraint which is in fact diagonal on 
the $w[F]$ and imposes the condition that the smearing functions 
be divergence free 
\be \label{3.7}
\partial_a \; F^a_j=0
\ee
That is all there is to do to solve the Gauss constraint, no complicated 
closure constraints as in LQG have to be imposed.

The state of affairs is not as simple with respect to the spatial 
diffeomorphism and Hamiltonian constraints because in contrast to the 
Gauss constraint, which is independent of $A$, 
they depend linearly on $A$. However, the connection smeared 
against a function 
\be \label{3.8}
A[F]=<F,A>=\int_\sigma\; d^3x\; A_a^j\; F^a_j
\ee
is not a well defined operator in the Hilbert space, just $w[F]=e^{-i A[F]}$
is. We could now proceed as in LQG or Loop Quantum Cosmology 
(LQC) \cite{35} and replace this by say 
the regulated expression $A_\epsilon[F]=
\sin(\epsilon A[F])/\epsilon)=i(w[F_\epsilon]-w[-F_\epsilon])/(2\epsilon),\;\;
F_\epsilon=\epsilon F$ but this 
introduces ambiguities as we could use other approximants which are 
subleading as $\epsilon\to 0$ and anyway we cannot take the limit 
$\epsilon\to 0$.

Given the fact that only the {\it exponential} of $A[F]$ is well defined 
it is a natural question to ask whether at least the {\it exponentials}
of $D[u],\; C_\delta[M]$ are well defined. The idea is to regulate 
the $A$ dependence in those constraints in terms of Weyl elements 
of the form $w[F_\epsilon]$ and then to exponentiate those regulated 
constraints. This has the chance to bring the whole $A$ dependence of 
the constraints, which is not defined in its linear version, into the
exponent where it can itself become a (operation on a) Weyl element which 
would be well defined. Now while the Weyl elements $w[F_\epsilon]$ are not 
continuous as operators with respect to $\epsilon$, the functions 
$F_\epsilon$ are continuous in a suitable topology (e.g. as Schwartz 
functions or simply pointswise in the sense of smooth functions). 
Then taking the limit $\epsilon\to 0$ on the set of those functions
serves as a well motivated {\it definition} of the exponentiated and 
regulator free operator. We will first investigate this     
for general constraints linear in the momenta and then turn to 
the concrete U(1)$^3$ model constraints.      
\begin{Theorem} \label{th3.1} ~\\
Consider a phase space with configuration variables $E^I$ where 
$I$ is from any index set and conjugate momenta 
$A_I$ and let $C_M=A_I\; G^I_M(E)$ be phase space functions 
linear in $A$ and of arbitrary dependence in $E$ 
in precisely 
this ordering and labelled by and depending 
linearly on $M$. Consider 
a Hilbert space representation of the CCR and $^\ast$ relations 
with cyclic vacuum $E^I\Omega=0$, dense set of vectors of the 
form $w[F]\Omega,\; w[F]=e^{-i A[F]}, \; A[F]=A_I F^I$ and inner product 
$<\Omega, w[F]\Omega>=\delta_{F,0}$. Suppose that $G_M^I(E)\Omega=0$. Let 
\be \label{3.11}
C_{M,\epsilon}:=\sum_I\; 
\frac{i}{2\epsilon}\{w[\epsilon \;\delta_I]-w[-\epsilon\;h_I]\}
\;G_M^I(E);\;\;\; (h_I)^J:=\delta_I^J
\ee
Then 
\be \label{3.12}
e^{-i\; C_{M,\epsilon}}\;w[F]\Omega=w[F_{M,\epsilon}(F)]\;\Omega
\ee
where 
\be \label{3.13}
\lim_{\epsilon\to 0} F_{M,\epsilon}(F)=[e^{X_M} K](F,0)
\ee
Here $X_M$ is the classical Hamiltonian vector field of $C_M$ and 
$K^I(E,A):=E^I$ the I-the coordinate function.
\end{Theorem}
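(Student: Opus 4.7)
The plan is to reduce the claimed identity to a computation of the classical Hamiltonian flow of $C_M$ restricted to the invariant hypersurface $\{A=0\}$, and to extract the $\epsilon \to 0$ limit at the level of Weyl labels rather than in any operator topology (the latter being unavailable on this Narnhofer--Thirring-type representation, where $w[F]$ fails to be weakly continuous in $F$). The first step is purely algebraic: because $E^I\Omega = 0$ and the CCR in Weyl form give $E^I w[F]\Omega = F^I w[F]\Omega$, the $E$-dependent factor $G^I_M(E)$ acts on a Weyl-vacuum state by its eigenvalue $G^I_M(F)$, and the Abelian Weyl rule $w[G]\,w[F]\Omega = w[F+G]\Omega$ yields
\[
C_{M,\epsilon}\, w[F]\Omega \;=\; \sum_I\, \frac{i\,G^I_M(F)}{2\epsilon}\,\bigl(w[F+\epsilon h_I] - w[F-\epsilon h_I]\bigr)\,\Omega.
\]
Thus $C_{M,\epsilon}$ is represented on the cyclic sector by a label-dependent symmetric central difference operator; the hypothesis $G^I_M(E)\Omega = 0$, i.e.\ $G^I_M(0)=0$, pins $F=0$ as a fixed point of the induced label dynamics, matching the classical behaviour.

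Second, I would pass to the classical flow on labels. Because $C_M$ is linear in $A$, the hypersurface $\{A=0\}$ is invariant under its Hamiltonian flow: $\dot A_I|_{A=0} = A_J\,(\partial G^J_M/\partial E^I)|_{A=0} = 0$. On this submanifold the flow reduces to an ODE of the form $\dot F^I \propto G^I_M(F)$, whose time-one solution starting from $F$ is exactly $[e^{X_M}K](F,0)$ with $K^I(E,A)=E^I$. The same analysis applies to the regulated classical symbol $C^{\rm cl}_{M,\epsilon} = \sum_I G^I_M(E)\sin(\epsilon A_I)/\epsilon$ of the quantum operator, since $\sin(\epsilon A_J)/\epsilon$ still vanishes at $A=0$: the hypersurface remains invariant and the induced tangent flow is $\epsilon$-independent, so $F_{M,\epsilon}(F)$ equals $[e^{X_M}K](F,0)$ already at every finite $\epsilon$. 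To match this to the quantum side I would expand $e^{-itC_{M,\epsilon}}\,w[F]\Omega$ in a Dyson series and identify it term by term with the Taylor expansion of $w[F_{M,\epsilon,t}(F)]\,\Omega$ along the classical flow, using the central-difference identity above to translate $t$-derivatives into actions of $C_{M,\epsilon}$; taking $\epsilon \to 0$ on the label is then immediate.

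The hard part will be the identification in the second step. At finite $\epsilon$ the difference operator $C_{M,\epsilon}$ does not send a Weyl state to a scalar multiple of a single Weyl state but to a two-point superposition on the label lattice, so the assertion that the exponential reassembles into a single Weyl shift rests on a delicate resummation of the Dyson series --- equivalently, a term-by-term matching with the classical Taylor expansion of the regulated flow restricted to $\{A=0\}$. The invariance of $\{A=0\}$ and the linearity of $C_M$ in $A$ make this conceptually clean, but organising the combinatorial cancellations of off-$\{A=0\}$ contributions is where the work lies. The non-separability of the representation additionally rules out any operator-topology control of the $\epsilon\to 0$ limit and forces it to be handled entirely on the label side, which is precisely where the classical and quantum pictures coincide.
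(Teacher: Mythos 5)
Your first step coincides with the paper's (the eigenvalue property of $E^I$ on $w[F]\Omega$ and the Weyl relations give the label-difference form of $C_{M,\epsilon}$), but the second step contains a genuine error: the claim that ``$F_{M,\epsilon}(F)$ equals $[e^{X_M}K](F,0)$ already at every finite $\epsilon$'' is false for non-linear $G^I_M$, which is exactly the case of interest. The quantum label dynamics at finite $\epsilon$ is not the classical Hamiltonian flow of the regulated symbol restricted to $\{A=0\}$; it is generated by the finite-difference operator $X_{M,\epsilon}=\sum_I \hat{G}^I_M\,\Delta_{\epsilon,I}$, because each Weyl factor shifts the label by the finite amount $\epsilon h_I$ and $G^I_M$ is then evaluated at the shifted labels (this is the content of the paper's nested formula for $C_{M,\epsilon}^n\,w[F]\Omega$). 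Concretely, for one degree of freedom with $G_M(E)=E^2$ one finds $X_{M,\epsilon}K=F^2$, $X_{M,\epsilon}^2K=2F^3$, but $X_{M,\epsilon}^3K=6F^4+2\epsilon^2F^2$, whereas the classical flow gives $X_M^3K=6F^4$; so the finite-$\epsilon$ label map $F_{M,\epsilon}(F)=\sum_n\frac{1}{n!}(X_{M,\epsilon}^nK)(F,0)$ differs from $(e^{X_M}K)(F,0)$ by $O(\epsilon^2)$ terms, and the limit in the theorem is where the content of (3.13) actually lies, not a triviality ``on the label side''. This also undermines your plan for the part you yourself flag as hard: matching the Dyson series term by term against the Taylor expansion of the $\epsilon$-independent classical flow cannot succeed, because from third order onward the two expansions disagree at finite $\epsilon$; the correct target is the exponential of the difference operator, and only after the resummation does one send $\epsilon\to 0$ (together with the mollifier) to convert differences into derivatives.

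Relatedly, the central mechanism that makes the resummation into a \emph{single} Weyl state work is left unproven in your proposal. The paper supplies it in two moves: an induction showing $C_{M,\epsilon}^n\,w[F]\Omega=i^n(X_{M,\epsilon}^n w)[F]\Omega$, with $G^I_M$ evaluated at the successively shifted labels, and then the conjugation identity $e^{X_{M,\epsilon}}\,w_A\,e^{-X_{M,\epsilon}}=\exp(-i A_I\,e^{X_{M,\epsilon}}K^I e^{-X_{M,\epsilon}})$ together with $X_{M,\epsilon}\Omega=0$, which exhibits the result as one Weyl element whose label is the (finite-difference) transform of the coordinate function $K$. Your ``combinatorial cancellation of off-$\{A=0\}$ contributions'' is not the right picture either: the states are $E$-eigenstates, the flow of $K$ depends on $E$ alone, and there is nothing off $\{A=0\}$ to cancel; the actual combinatorial content is the nested shifted arguments of $G^I_M$, which survive at finite $\epsilon$ and disappear only in the limit.
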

\begin{proof}:\\
We have by the spectral theorem and due to $G^I_M\Omega=0$ 
\be \label{3.14}
C_{M\epsilon}\;w[F]\Omega=
G_M^I[F]\; 
\frac{i}{2\epsilon}[w[F+\epsilon h_I]-w[F-\epsilon h_I]]\;\Omega
\ee
We introduce multiplication and discrete derivative operations 
\be \label{3.15}
[\hat{G}_M^I \; w](F):=G_M^I(F)\; w[F],\;\;
(\Delta_{\epsilon,I}\; w)[F]:=
\frac{1}{2\epsilon}[w[F+\epsilon h_I]-w[F-\epsilon h_I]]
\ee
on the space of operator valued functionals $F\mapsto\;f[F]$.  
Then 
\be \label{3.16}
C_{M,\epsilon}\;w[F]\Omega=i
(\sum_I\; \hat{G}_M^I\; \Delta_{\epsilon,I}\; w)[F]\;\Omega
=:i\; (X_{M,\epsilon}\; w)[F]\;\Omega
\ee
We now show by induction that 
\be \label{3.17}
C_{M,\epsilon}^n\;w[F]\Omega=i^n\; (X_{M,\epsilon}^n \; w)[F]  
\ee
To see this we write $(X_{M,\epsilon}^n \; w)[F]$ explicitly
\ba \label{3.18}
&& (X_{M,\epsilon}^n \; w)[F]=\frac{1}{(2\epsilon)^n}\;
\sum_{\sigma_1,..,\sigma_n;I_1,..,I_n}\; 
\\
&& G_M^{I_1}(F)\;G_M^{I_2}(F+\epsilon\sigma_1\;h_{I_1})\;...\;
G_M^{I_n}(F+\epsilon(\sigma_1\;h_{I_1}+..+\sigma_{n-1}\; h_{I_{n-1}}))
\;\; w[F+\epsilon(\sigma_1 h_{I_1}+..+\sigma_n h_{I_n})]
\nonumber
\ea
with $\sigma_k=\pm 1,\; k=1,..,n$, which one also can see by induction. 
Using again that $G_M^I(E)$ is diagonal on $w[F]\Omega$ with eigenvalue 
$G_M^I(F)$ and the induction assumption we get
\ba \label{3.19}
&& 
C_{M,\epsilon}^{n+1}\;w[F]\Omega
=C_{M,\epsilon}\;C_{M,\epsilon}^n\;w[F]\Omega=
i^n\; C_{M,\epsilon}\; (X_{M,\epsilon}^n \; w)[F]  
\nonumber\\
&=& \frac{i^{n+1}}{(2\epsilon)^{n+1}}\;
\sum_{\sigma_1,..,\sigma_{n+1};I_1,..,I_{n+1}}\; 
\nonumber\\
&& G_M^{I_1}(F)\;
G_M^{I_2}(F+\epsilon\sigma_1\; h_{I_1})\;...\;
G_M^{I_n}(F+\epsilon(\sigma_1\;h_{I_1}+..+\sigma_{n-1}\; h_{I_{n-1}}))
\;\;C_{M,\epsilon}\;w[F+\epsilon(\sigma_1 h_{I_1}+..+\sigma_n h_{I_n})]
\nonumber\\
&=& \frac{i^n}{(2\epsilon)^n}\;
\sum_{\sigma_1,..,\sigma_n;I_1,..,I_n}\;
\nonumber\\
&& G_M^{I_1}(F)\;
G_M^{I_2}(F+\epsilon\sigma_1\;h_{I_1})\;...\;
G_M^{I_{n+1}}(F+\epsilon(\sigma_1\;h_{I_1}+..+\sigma_n\; h_{I_n}))
\;\; w[F+\epsilon(\sigma_1 h_{I_1}+..+\sigma_{n+1} h_{I_{n+1}n})]
\nonumber\\
&=& i^{n+1}\; (X_{M,\epsilon}^{n+1}\; w)[F]\;\Omega
\ea
It follows using formal Taylor expansion
\be \label{3.20}
e^{-i\; C_{M,\epsilon}}\; w[F]\Omega
=(e^{X_{M,\epsilon}}\; w)[F]\;\Omega
\ee
The operator $w[F]$ on the Hilbert space of square inegrable functions 
with respect to the Bohr measure acts by multiplication 
\be \label{3.21}
(w[F]\psi)(A)=e^{-i A_I\; F^I}\; \psi(A)=:w_A[F]\; \psi(A)
\ee
Therefore (\ref{3.20}) when evaluated at $A$ may also be written 
\be \label{3.22}
\{e^{-i\; C_{M,\epsilon}}\; w[F]\Omega\}(A)
=(e^{X_{M,\epsilon}}\; w_A)[F]\;\Omega(A)
\ee
as $X_{M,\epsilon}$ does not act on $A$. We introduce the coordinate 
function on the classical phase space 
$K^I(E,A)=E^I$ whence $w_A[F]=\exp(-i A_I K^I(F,0))$. We formally 
extend $\Omega$ to be the constant function of $F$ i.e. 
$\Omega(A,F)=\Omega(A)$ so that 
\be \label{3.23}
\{e^{-i\; C_{M,\epsilon}}\; w[F]\Omega\}(A)
=(e^{X_{M,\epsilon}}\; w_A\;\Omega)(A,F)
=(e^{X_{M,\epsilon}}\; w_A\;e^{-X_{M,\epsilon}}\;\Omega)(A,F)
\ee
where $X_{M\epsilon}\Omega(A,F)=0$ was used, i.e. that 
$\Delta_{\epsilon,I}$ annihilates constant functions. Now
\be \label{3.24}
e^{X_{M,\epsilon}}\; w_A\;e^{-X_{M,\epsilon}}
=\exp(-i\; A_I\; e^{X_{M,\epsilon}}\; K^I(.,0)\;e^{-X_{M,\epsilon}})
=\exp(-i\; A_I\;e^{X_{M,\epsilon}}\; K^I(.,0)\;e^{-X_{M,\epsilon}})
\ee
Finally 
\be \label{3.25}
[e^{X_{M,\epsilon}}\; K^I(.,0)\;e^{-X_{M,\epsilon}})](F)
=[\sum_{n=0}^\infty\;\frac{1}{n!}\; [X_{M,\epsilon},K^I(.,0)]_{(n)}]
=\sum_{n=0}^\infty\;\frac{1}{n!}\; (X_{M,\epsilon}^n\;K^I)(F,0)
\ee
which converges pointwise in phase space to 
\be \label{3.26}
(e^{X_M}\; K^I)(F,0)
\ee
i.e. the Hamiltonian flow of $C_M$.\\
\end{proof}
Theorem \ref{th3.1} motivates the following definition.
\begin{Definition} \label{def3.1}
The exponentiated constraints are densely defined by
\be \label{3.27}
U(M) w[F]\Omega:=e^{-i\;C_M} \; 
w[F]\;\Omega:=w[(e^{X_M} K)(F)]\; \Omega
\ee
\end{Definition}
We note that the linearity of $C_M$ in $A$ is essential: It means that the 
Hamiltonian flow $e^{X_M}$ {\it preserves the polarisation} of the phase 
space, i.e. it maps functions of $E$ again to functions of $E$ only. 
This is no longer true for say a quadratic dependence on $A$ which is 
why what follows only applies to the U(1)$^3$ truncation of Euclidian 
GR. On the other hand, the density $\delta$ cosmological term 
$V_M=\Lambda \int\; d^3x\; |\det (E)|^{\delta/2}$ just depends on $E$ 
and thus trivially preserves the $E$ polarisation. We could therefore 
simply add it to $C_M$ and keep the definition (\ref{3.27}). However,
then the cosmological constant contribution completely drops out 
from $e^{X_M}\; K$, thus this cannot be the correct generalisation 
of theorem \ref{th3.1} when an additional potential term is present.
\begin{Theorem} \label{3.1a} ~\\
Keep all assumptions as in theorem \ref{th3.1} except that the 
constraints are generalised by a potential term 
$C_M=A_I\; G_M^I(E)+V_M(E)$ such that also $V_M(E)\Omega=0$. Then 
\be \label{3.27a}
e^{-i C_M}\; w[F]\;\Omega=e^{-i\alpha_M[F]}\; w[(e^{X_M}\; K)(F)]\;\Omega
\ee
where $X_M$ is the Hamiltonian vector field of $A_I G_M^I$ and the 
phase $\alpha_M:=[\alpha_M^s]_{s=1}$ is given by 
\be \label{3.27b}
\alpha^s_M(F)=\int_0^s\; dt\; [V_M(e^{t X_M}\; K)](F)
\ee
\end{Theorem}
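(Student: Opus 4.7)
The approach is to extend the inductive proof of Theorem \ref{th3.1} by including $V_M(E)$ as an additional diagonal operator, and then to solve the resulting first-order transport equation (obtained in the $\epsilon\to 0$ limit) by the method of characteristics. First I would introduce the natural regularisation
\begin{equation*}
C_{M,\epsilon} := \sum_I \frac{i}{2\epsilon}\bigl\{w[\epsilon h_I]-w[-\epsilon h_I]\bigr\}\, G_M^I(E) + V_M(E),
\end{equation*}
observe that $V_M(E)$, like $G_M^I(E)$, annihilates $\Omega$ and is diagonal on $w[F]\Omega$ with eigenvalue $V_M(F)$, and conclude by the same manipulations as in (\ref{3.16}) that
\begin{equation*}
-i C_{M,\epsilon}\, w[F]\Omega = (\tilde Z_{M,\epsilon}\, w)[F]\,\Omega,\qquad \tilde Z_{M,\epsilon} := X_{M,\epsilon}-i\hat V_M,
\end{equation*}
where $\hat V_M$ acts on operator-valued functionals $F\mapsto f[F]$ by multiplication with $V_M(F)$.

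The second step extends the induction (\ref{3.17})--(\ref{3.19}). When $-iC_{M,\epsilon}$ acts on a generic summand ``(coefficient)\,$\cdot\, w[F+\mathrm{shifts}]\,\Omega$'' appearing at level $n$, the Weyl-regulator half gives the usual $X_{M,\epsilon}$ contribution with prefactor $G_M^I(F+\mathrm{shifts})$, exactly as in (\ref{3.18}), while the $V_M(E)$ half acts diagonally with eigenvalue $V_M(F+\mathrm{shifts})$, which is precisely what the $-i\hat V_M$ piece of $\tilde Z_{M,\epsilon}$ generates on the corresponding functional. Hence $(-iC_{M,\epsilon})^n w[F]\Omega = (\tilde Z_{M,\epsilon}^n w)[F]\Omega$ for all $n$, and formal Taylor resummation gives
\begin{equation*}
e^{-iC_{M,\epsilon}}\, w[F]\,\Omega = (e^{\tilde Z_{M,\epsilon}}\,w)[F]\,\Omega.
\end{equation*}

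The third step is the $\epsilon\to 0$ limit. By the calculation of (\ref{3.21})--(\ref{3.26}), $X_{M,\epsilon}$ converges pointwise on $w_A$ to the first-order differential operator $X_M = G_M^I(F)\,\partial/\partial F^I$, so $\tilde Z_{M,\epsilon}\to X_M - iV_M(F)$; note that the $V_M$-contribution to the full Hamiltonian vector field of $C_M$ acts purely in the $A$-direction and therefore drops out once the $E$-coordinate function $K$ is evaluated. Denoting by $\Phi_t:=e^{tX_M}$ the induced flow on the $E$-polarisation, I would solve the Cauchy problem $\partial_s u = X_M u - iV_M(F)\,u$, $u(0,F)=w_A[F]$, by the substitution $v(s,F):=u(s,\Phi_{-s}(F))$. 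The convective term then cancels, leaving $\partial_s v = -iV_M(\Phi_{-s}(F))\,v$, which integrates (after the change of variables $t\mapsto s-t$ and setting $s=1$) to
\begin{equation*}
u(1,F) = \exp\Bigl(-i\int_0^1 dt\, V_M(\Phi_t(F))\Bigr)\, w_A[\Phi_1(F)].
\end{equation*}
Feeding this back through (\ref{3.22})--(\ref{3.23}) and identifying $\Phi_t(F) = (e^{tX_M}K)(F,0)$ yields precisely (\ref{3.27a}) with phase (\ref{3.27b}).

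The main obstacle is conceptual, not technical, and is identical to the one already handled in Theorem \ref{th3.1}: both the formal Taylor resummation and the pointwise $\epsilon\to 0$ limit must be adopted as the defining prescription for the exponentiated operator on the non-separable Hilbert space. The prescription is legitimate precisely because the classical flow $e^{X_M}$ preserves the admissible class of (divergence-free, non-degenerate) smearings, so the right-hand side still lies in the dense span of excited Weyl vectors, and because $V_M$ is sufficiently regular along trajectories for $\alpha_M^s(F)$ to remain a finite real functional, so that the overall phase $e^{-i\alpha_M[F]}$ is unitary.
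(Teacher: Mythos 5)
Your proposal is correct and follows essentially the same route as the paper: you reduce, exactly as in Theorem \ref{th3.1}, to the action of $e^{X_M-i V_M}$ on functionals of $F$ (with $V_M$ acting by multiplication), then factor out the flow of $X_M$ and accumulate the potential along that flow as the phase $\int_0^1 dt\, V_M(e^{t X_M}K)$. The only inessential difference is the last computational step: the paper extracts the phase via the Trotter-product intertwining computation (\ref{3.27d})--(\ref{3.27f}), while you solve the equivalent first-order transport equation by characteristics; since $V_M$ enters only as a scalar multiplication these are the same formal evaluation of the ordered exponential and yield the identical result.
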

\begin{proof}:\\
We follow exactly the same steps as in the proof of theorem \ref{3.1}.
Then we find 
\ba \label{3.27c}
&& e^{-i C_M} \; w[F]\;\Omega
=(e^{X_M-i V_M} \; w[K])_{K=F}\;\Omega
=(e^{X_M-iV_M} \; w[K] \; e^{-(X_M-iV_M)} )_{K=F}\;
[e^{X_M-iV_M}\Omega](K)_{K=F}
\nonumber\\
&=& w[(e^{X_M}\; K)(F)] \;[e^{X_M-iV_M}\Omega](K)_{K=F}
\ea
The difference to the previous situation is that still $X_M\Omega=0$ but 
$(V_M \Omega)(F)=V_M(F)\Omega\not=0$ because $V_M$ is a multiplication
operator of the space of $F$ dependent funactions. We define the second factor 
by the Trotter product (note that we consider the continuous space 
of smearing functions itself, not the discontinuous space of functions
of connections labelled by them) 
\be \label{3.27d}
[e^{s(X_M-iV_M)}\Omega](K)_{K=F}:=\lim_{N\to \infty}\;
([e^{\frac{s}{N}\; X_M}\; e^{-i\frac{s}{N}\; V_M}]^N\; \Omega)(K)_{K=F}
\ee
Using $e^{-\frac{s}{N}X_M}\Omega=\Omega$ and 
\be \label{3.27e}
e^{\frac{s}{N} X_M}\; e^{-i\frac{s}{N}\;V_M(e^{\frac{ks}{N} X_M}\;K)}\; 
e^{-\frac{s}{N} X_M}
=e^{-i\frac{s}{N}\;V_M(e^{\frac{(k+1)s}{N} X_M}\;K)}\; 
\ee
we can compute (\ref{3.27d}) exactly
\be \label{3.27f}
[e^{s(X_M-iV_M)}\Omega](K)_{K=F}=\lim_{N\to \infty}\;
\; [e^{-i\frac{s}{N}\;\sum_{k=1}^N\; V_M(e^{\frac{ks}{N} X_M}\;K)}](F)\;\Omega
=e^{-i[\int_0^s\;dt\; V_M(e^{t X_M}\; K)](F)}\;\Omega
\ee
\end{proof}
Given the fact that the action of the quantum constraints is dictated by their
classical Hamiltonian flow on ``polarised'' functions (in the sense 
of geometric quantisation \cite{44}) we obtain the following expected 
result. 
\begin{Theorem} \label{th3.2} ~\\
The exponentiated constraints have the following properties:\\
1. unitarity $U(M)^\dagger=U(-M)=U(M)^{-1}$\\
2. weak discontinuity\\
3. anomaly freeness (in the sense defined below)
\end{Theorem}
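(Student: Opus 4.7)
The plan is to reduce all three claims to geometric properties of the classical Hamiltonian flow $\phi_M := (e^{X_M} K)(\,\cdot\,,0)$ that appears in Definition \ref{def3.1}, exploiting the fact that $U(M)$ acts on the distinguished orthonormal basis $\{w[F]\Omega\}_F$ simply by relabeling $F \mapsto \phi_M(F)$.

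First I would prove unitarity. Because $C_M$ is linear in $M$, so is $X_M$, hence $\phi_{-M} = \phi_M^{-1}$ as a map on the space of smearing functions. The vacuum expectation value $\langle \Omega, w[F]\Omega \rangle = \delta_{F,0}$ together with the Weyl relation $w[F]^\ast w[F'] = w[F'-F]$ yields the orthonormality $\langle w[F]\Omega, w[F']\Omega\rangle = \delta_{F,F'}$. Since $\phi_M$ is a bijection on the smearing labels,
\begin{equation*}
\langle U(M) w[F]\Omega, U(M) w[F']\Omega\rangle = \delta_{\phi_M(F),\phi_M(F')} = \delta_{F,F'},
\end{equation*}
so $U(M)$ extends to an isometry of the Hilbert space. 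The group property $U(-M)U(M) w[F]\Omega = w[\phi_{-M}(\phi_M(F))]\Omega = w[F]\Omega$, together with the analogous identity for $U(M)U(-M)$, shows $U(M)$ is invertible with $U(M)^{-1} = U(-M)$, and isometry then forces $U(M)^\dagger = U(-M)$.

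Second, for weak discontinuity I would evaluate the one-parameter matrix element
\begin{equation*}
\langle w[F]\Omega, U(tM) w[F]\Omega \rangle = \delta_{F,\phi_{tM}(F)}.
\end{equation*}
For any $F$ that is not a fixed point of $X_M$, the flow moves $F$ off itself for arbitrarily small $t \neq 0$, so the Kronecker $\delta$ equals $0$ for such $t$ while equaling $1$ at $t=0$. Hence $t \mapsto U(tM)$ fails to be weakly continuous at the origin, matching the Narnhofer--Thirring character of the representation.

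Finally, for anomaly freeness, I would read the statement as: the composition $U(M_1) U(M_2) \cdots U(M_n)$ acts on $w[F]\Omega$ as $w[(\phi_{M_1} \circ \phi_{M_2} \circ \cdots \circ \phi_{M_n})(F)]\Omega$, and thus the algebraic identities satisfied by the classical composition of Hamiltonian flows carry over verbatim to operator identities between the $U$'s, without any ordering, regularisation, or correction terms. The key input is that $\phi_M$ preserves the $E$-polarisation (the linearity of $C_M$ in $A$ is essential here, as already noted after Definition \ref{def3.1}), so $\phi_{M_1}\circ\phi_{M_2}$ is again of the form $\tilde\phi|_{A=0}$ for the classical composition $e^{X_{M_1}}e^{X_{M_2}}$ acting on $K$. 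Since the hypersurface deformation algebroid identities of (\ref{2.5}) are statements about these classical flows, they are automatically reproduced at the quantum level on the basis $w[F]\Omega$, with structure functions $q^{ab}(F)$ appearing as eigenvalues via (\ref{3.6}).

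The main obstacle will be the third part, because the algebroid character of the classical constraints means that $U(M)U(N)U(M)^{-1}U(N)^{-1}$ cannot be expressed as a single $U(\text{something})$: the ``infinitesimal'' composition produces a spatial diffeomorphism whose shift vector depends on $F$ through the non-degenerate eigenvalue $q^{ab}(F)$, i.e.\ on the state being acted upon. I would handle this by formulating anomaly freeness not as a group law but as the statement that the quantum composition agrees, \emph{on every basis state} $w[F]\Omega$, with the classical flow composition evaluated at the phase-space point $(E=F, A=0)$; the non-degeneracy condition encoded in $\det(F)\neq 0$ guarantees that all the required inverse-metric factors are well-defined eigenvalues, so no ambiguity or anomaly arises. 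The precise formulation of anomaly freeness on dual spaces (as in Section \ref{s5}) will then follow by duality once the operator composition is under control.
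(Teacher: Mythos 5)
Your items 1 and 2 are essentially the paper's own argument: orthonormality of the $w[F]\Omega$ together with bijectivity of the classical flow gives isometry and $U(M)^{-1}=U(-M)$, and the Kronecker-valued matrix element of $U(sM)$ rules out weak continuity. One caveat: the theorem is proved in the paper in the generality of Theorem \ref{3.1a}, where $U(M)$ carries the phase $e^{-i\alpha_M[F]}$ from a potential term; there unitarity additionally requires the identity $\alpha_{-M}\big((e^{X_M}K)(F)\big)=-\alpha_M(F)$, which your argument (tacitly setting $V_M=0$) never addresses. For the zero--cosmological-constant model this is harmless, but it should be stated.

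The genuine gap is in item 3. The ``sense defined below'' is Definition \ref{def3.2}, whose whole point is a comparison between two \emph{independently} defined objects: the commutator product $U(sM)U(tN)U(-sM)U(-tN)$ and a separate quantisation $U([sM,tN])$ of $\exp(i\{C_{sM},C_{tN}\})$ --- possible precisely because $\{C_M,C_N\}=A_I\,H^I_{M,N}(E)+L_{M,N}(E)$ is again linear in $A$ with $E$-dependent structure functions ordered to the right, so the exponentiation theorem applies to it as well. Anomaly freeness is then the matching of the mixed derivatives at $s=t=0$ of both the flowed labels $F_{s,t}$ vs.\ $F'_{s,t}$ and the accumulated phases. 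Your reformulation --- that the quantum composition agrees on each $w[F]\Omega$ with the composed classical flows evaluated at $(E=F,A=0)$ --- is true by construction of $U(M)$ in Definition \ref{def3.1} and therefore carries no content as an anomaly-freeness criterion; it is essentially the trivial declaration $U([M,N]):=U(M)U(N)U(-M)U(-N)$ that the paper explicitly excludes. What is missing, concretely: (i) the composition law $U(M)U(N)w[F]\Omega$ obtained from the automorphism property of the flow, which produces $w[(e^{X_N}e^{X_M}K)(F)]\Omega$ (note the order reversal) together with the accumulated phases; (ii) the expansion of the fourfold product to order $st$, both of the label (via $e^{sX_M}e^{tX_N}e^{-sX_M}e^{-tX_N}=1+st[X_M,X_N]+\dots$) and of the phase; (iii) the independent construction of $U([sM,tN])$ and the verification that $X_{sM,tN}=st[X_M,X_N]$ and that its phase reproduces the order-$st$ phase of the commutator product, so that the two sides of Definition \ref{def3.2} agree. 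Without these steps item 3 is restated rather than proved. (Minor point: the dual-space version of anomaly freeness you defer to is in Section \ref{s4}, not Section \ref{s5}.)
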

\begin{proof}:\\
1.\\
By the assumed linearity in $M$ we have $X_M=-X_{-M}$ and by elementary 
properties of Hamiltonian vector fields $e^{-X_M}\; e^{X_M}={\rm id}$
is the identity canonical transformation. Next 
\ba \label{3.28a}
&& U(-M)\; U(M)\; w[F]\Omega
=e^{-i\alpha_M(F)}\; U(-M)\; w[(e^{X_M}\;K)(F)]\Omega
\nonumber\\
&= &e^{-i[\alpha_M(F)+\alpha_{-M}((e^{X_M} K)(F))]}\;
\; w[e^{X_M}\;e^{X_{-M}}\;F]\Omega=w[F]\;\Omega
\ea
as, using again linearity in $M$ i.e. $V_{-M}=-V_M$
\be \label{3.28b}
\alpha_{-M}((e^{X_M} K)(F)
=\int_0^1\; dt\; V_{-M}((e^{X_M}\; e^{-t X_M}\;K)(F)
=-\int_0^1\; dt\; V_M((e^{1-t)X_M}\;K)(F)=-\alpha_M(F)
\ee
Thus $U(M)^{-1}=U(-M)$ and $U(M)$
has an inverse on the dense span of the $w[F]\Omega$. Then
\ba \label{3.28}
&& <w[F]\Omega, U(M)\; w[F']\Omega>
=e^{-i\alpha_M(F')}\delta_{F,[e^{X_M} K](F')}
=e^{-i\alpha_M(F')}\;\delta_{[e^{-X_M} K](F),F'}
\nonumber\\
&=& e^{-i\alpha_{M}((e^{-X_M} K)(F))}\;\delta_{[e^{-X_M} K](F),F'}
=e^{i\alpha_{-M}(F)}\;\delta_{[e^{-X_M} K](F),F'}
=<U(-M)\; w[F]\Omega,w[F']\Omega>
\ea
is unitary on the same domain where (\ref{3.28b}) was used again.
The extension to the full Hilbert as a unitary 
operator to the full Hilbert space then is a consequence of the BLT theorem.\\
2.\\
One parameter unitary subgroups are of the form $s\mapsto U(sM)$ with $M$ 
fixed. Then for instance \\
$<w[F]\Omega,\;\; U(sM)\; w[F]\Omega>=\delta_{s,0}$.\\
3.\\
As by item 2. the self-adjoint 
generators $C_M$ of $s\mapsto U(sM)$ do not exist by Stone's theorem, we 
content ourselves by verifying the classically equivalent finite versions 
of the classical closure condition
\be \label{3.29}
\{C_\alpha,C_\beta\}=\kappa_{\alpha,\beta}\;^\gamma \; C_\gamma,\;
C_M=\sum_\alpha\; M^\alpha\; C_\alpha
\ee
where $\kappa_{\alpha,\beta}\;^\gamma$ are structure functions on the 
phase space, i.e. they may depend non-trivially on $E$. Note that they 
cannot depend on $A$ because 
\be \label{3.30}
\{C_M,C_N\}=2\; (A_I\;\{G^I_{[M},A_J\} G^J_{N]}+\{V_{[M},A_I\}\; G_{N]}^I) 
\ee
contains terms at most linear in $A$ which themselves must combine 
to the constraint operators. 
Recall the identity 
\be \label{3.30}
e^{s\; X_M}\; e^{t\; X_N}\; e^{-s\; X_M}\; e{-t\; X_N}
=1+st\;[X_M, X_N]+O(s^3,s^2 t, s t^2, t^3)
=1+st\;X_{\{C_M-V_M,C_N-V_N\}}+O(s^3,s^2 t, s t^2, t^3)
\ee
We need the {\it composition law} of the $U(M)$
\be \label{3.31}
U(M)\; U(N)\;w[F]\Omega
=e^{-i \alpha_N(F)}\; U(M)\;w[(e^{X_N} K)(F)]\Omega
=e^{-i[\alpha_N(F)+\alpha_M((e^{X_N} K)(F))]}\;
w[(e^{X_M} K)(e^{X_N} K)(F))]\Omega
\ee
Making use of the automorphism property of the Hamiltonian flow for 
a general function $H$ on the phase space and with the coordinate 
function $P_I(E,A)=A_I$
\be \label{3.32}
(e^{X_M} H)(E,A)=[H(e^{X_M} K, e^{X_M} P)](A,E)
\ee
and applied to $H=e^{X_M} K$
\be \label{3.32}
U(M)\; U(N)\;w[F]\Omega
=e^{-i[\alpha_N(F)+\alpha_M((e^{X_N} K)(F))]}\;
w[(e^{X_N}\; e^{X_M} K)(F)]\Omega
\ee
Iterating with $U_j=U(M_j),\;X_j=X_{M_j},\; \alpha_j=\alpha_{M_j},\;
\; j=1,..,N$
\be \label{3.33}
U_1\; .. \; U_N\; w[F]\Omega
=e^{-i[\alpha_N(F)+\alpha_{N-1}((e^{X_N} K)(F))+..
+\alpha_1((e^{X_N}\;..\;e^{X_2}\;K)(F))]}\;
w[(e^{X_N}\; ..\; e^{X_1}\; K)(F)]\;\Omega
\ee
It follows that 
\ba \label{3.34}
&& U(sM)\;U(tN)\; U(-sM)\; U(-tN)\; w[F]\Omega
\nonumber\\
&=& e^{-i[\alpha_{-tN}(F)+\alpha_{-sM}((e^{-t X_N} K)(F))
+\alpha_{tN}((e^{-tX_N} e^{-sX_M}\; K)(F))
+\alpha_{sM}((e^{-tX_N} \; e^{-s X_M}\; e^{t X_N}\;K)(F))]}
\nonumber\\
&& w[(e^{-t X_N}\; e^{-s X_M} \; e^{t X_N}\; e^{s X_M} \;K)(F)]\Omega
\nonumber\\
&=&
e^{-i[\alpha^t_{-N}(F)+\alpha^s_{-M}((e^{-tX_N} K)(F))
+\alpha^t_{N}((e^{-tX_N} e^{-sX_M}\; K)(F))
+\alpha^s_{M}((e^{-t X_N} \; e^{-s X_M}\; e^{t X_N}\;K)(F))]}
\nonumber\\
&& w[F+st \;[X_N,X_M]\; K)(F)+O(s^3,s^2t,s t^2, t^3)]\;\Omega
\ea
We also expand the phase in (\ref{3.34}) keeping terms up to second order 
and note that the flows only need to be expanded to linear order as 
the integrals are already of first order
\ba \label{3.34a}
&&
\int_0^t\; dr\;[e^{-s X_M}\; e^{-t X_N}\; e^{r X_N}-e^{-r X_N}]\; V_N
+\int_0^t\; dr\;[e^{t X_N}\; e^{-s X_M}\; e^{-t X_N}\; e^{r X_M}-
e^{-t X_N}\; e^{-r X_M}]\; V_M
\nonumber\\
&=& st\; (X_N V_M-X_M V_N)+O(s^3,s^2t,s t^2, t^3)]
\ea
We now write (\ref{3.29} in the explicit form 
\ba \label{3.35}
C_M & =& M^\alpha \; [A_I \; G_\alpha^I(E) + V_\alpha(E)],\;
\nonumber\\
\{C_M,C_N\} &=& M^\alpha\;N^\beta\; 
[A_I\; G_\gamma^I(E)+V_\gamma(E)]
\kappa_{\alpha,\beta}\;^\gamma(E)
\\
&=:& A_I\; H_{M,N}^I(E)+L_{M,N}(E)
=\{C_M-V_M,C_N-V_N\}+\{C_M-V_M,V_N\}-\{C_N-V_N,V_M\}
\nonumber
\ea
which is again at most linear in $A_I$ and where we have ordered 
all dependence on $E$ to the right.
Thus replacing $G_M^I,\;V_M$ by 
$H_{M,N}^I,\; L_{M,N}$ and  
following the same steps as for $C_M$ we define 
\be \label{3.36}
U([M,N])\;w[F]\Omega:=\exp(i \{C_M,C_N\})\;w[F]\Omega=
e^{i\alpha_{M,N}(F)}\;w[(e^{-X_{M,N}} K)(F)]\;\Omega 
\ee
where $X_{M,N}$ is the Hamiltonian vector field of $A_I H_{M,N}^I$ and 
$\alpha_{M,N}(F)$ is the phase 
\be \label{3.36a}
\int_0^1\; dr\; e^{r\; X_{M,N}}\; L_{M,N}
\ee
Since 
\be \label{3.36b}
X_{sM,tN}=st\; X_{M,N}
=st\; X_{\{C_M-V_M,C_N-V_N\}}
=st\; [X_{C_M-V_M},X_{C_N-V_N}]
=st\; [X_M,X_N]
\ee
and 
\be \label{3.36c}
\alpha_{sM,sN}(F)=\alpha^{st}_{M,N}(F)
=st\; L_{M,N}(F)+O(s^3,s^2t,st^2,t^3)
=st\; (X_M V_N-X_N V_M)(F)+O(s^3,s^2t,st^2,t^3)
\ee
Now due to our choice of representation $E^I=i\partial/\partial A_I$ we have 
$[A_I,E^J]=-i\delta_I^J=-i\{A_I,E^J\}$ and thus expect 
$[C_M,C_N]=-i\{C_M,C_N\}=i\{C_M,C_N\}$ 
to leading order in the quantum theory i.e. to leading order
\be \label{3.36d}
U(M)\;U(N)\;U(-M)\;U(-N)=1-[C_M,C_N]=1+i\{C_M,C_N\}=U([M,N])
\ee
These relations 
establish that the composition law (\ref{3.32}) has resulted 
in the leading order exponentiated substitute (\ref{3.36d}) 
for the infinitesimal version (\ref{3.29}),
i.e. that the representation of the $U(M)$ is free of anomalies in the sense 
of the subsequent definition.
\end{proof}
The precise statement of anomaly freeness for the exponentiated versions 
is given in the following definition.
\begin{Definition} \label{def3.2} ~\\
Suppose that operators $U(M)=\exp(-i C_M),\; 
U([[M,N])=U(i\{C_M,C_N\})$ 
are defined on the common, dense, invariant domain given by 
the linear span of the $w[F]\Omega$ and suppose that 
\be \label{3.37}
U(sM)\;U(tN)\;U(-sM)\;U(-tN)\; w[F]\Omega=
e^{-i\alpha_{s,t}}\;w[F_{s,t}]\Omega,\;\;
U([sM,tN])\;w[F]\Omega=e^{-i\alpha'_{s,t}}\;w[F'_{s,t}]\Omega
\ee
Then the $U(M)$ are said to be represented free 
of anomalies on the Hilbert space $\cal H$ with dense span of the 
$w[F]\Omega$ iff $U(0)=U([0,0])=1_{{\cal H}}$ and 
\be \label{3.38}
[\frac{d}{ds}\;\frac{d}{dt}\;\{F_{s,t}-F'_{s,t})\}]_{s,t=0}=0
=[\frac{d}{ds}\;\frac{d}{dt}\;\{\alpha_{s,t}-\alpha'_{s,t})\}]_{s,t=0}
\ee
\end{Definition}
This definition is general enough to encompass the situation 
that the 1-parameter groups $s\mapsto U(sM)$ are not weakly continuous:
Thus, while we cannot take the derivatives or even limits of the 
$U(sM)$ we can take derivatives or limits of the $F_{s,t}$. 
Note also that we insist on independent quantisations of 
$U(M)=\exp(-i C_M),\; U([M,N])=\exp(-i\{C_M,C_N\})$ as otherwise 
we can trivially obtain anomaly freeness by declaring 
$U([M,N])=U(M) \; U(N)\;U(-M)\; U(-N)$. 

An expected but unusual property of the operators $U(M)$ is that 
their products $U(M_1)\; U(M_2)$ in general cannot be written in the form 
$U(M_3)$,
not even when $M_1,M_2$ are close to zero and thus $U(M_1), U(M_2)$ are
``close'' to ${\rm id}_{{\cal H}}$. This is precisely due to the fact that 
the $[X_{M_1},X_{M_2}]$ is not a linear combination with structure 
constants of the $X_{M_3}$ but with structure functions,
i.e. that we have a Lie algebroid rather than a Lie algebra structure.\\
\\
We now provide the details about the concrete situation in the 
U(1)$^3$ model. We treat only the case of zero cosmological constant,
its inclusion is straightforward given the general theory above.
First we integrate 
the smeared constraints by parts and write them in the form
\ba \label{3.9}
&& D[u]=<A,G_u>,\; 
C_\delta[M]=<A,G_{\delta,M}>,\;
\nonumber\\
&&
(G_u)^a_j(x)=-2[\partial_b(E^{[a}_j u^{b]})](x),\; 
(G_{\delta,M})_a^j(x)=-2[\partial_b(M \epsilon^{jkl} E^a_k E^b_l\; 
|\det(E)|^{(\delta-2)/2})](x)
\ea
thereby displaying the functions $G_M^I$ of theorem \ref{th3.1}
where the index $I=(a,j,x)$ is compound and summation over $I$ means 
summing over $a,j$ and integrating over $x$. Correspondingly,  
the $h_I$ of the theorem are given by 
\be \label{3.10}
[h_a^j(x,\kappa)]^b_k(y) = \delta_a^b\; \delta^j_k\; \delta_\kappa(x,y)
\ee
where $\kappa\mapsto \delta_\kappa$ is a mollified $\delta$ distribution,
i.e. a 1-parameter family of smooth functions converging to the $\delta$ 
distribution on Schwartz space over $\sigma$. With their help we define
the analog of the discrete derivative 
$\Delta_{\epsilon,I}$ on functionals of the functions $F^a_j(x)$ 
of the theorem by 
\be \label{3.10a}
(\Delta_{\epsilon,\kappa,a,j,x} W)[F]:=
\frac{1}{2\epsilon}
\{W[F+\epsilon h_a^j(x,\kappa)]-W[F-\epsilon h_a^j(x,\kappa)]
\ee
With all the $E^a_j(x)$ ordered to the outmost right, the assumption 
$G_u \Omega=G_M \Omega=0$ is met if we set for $\delta<2$
\be \label{3.10b}
[|\det(E)|^{(\delta-2)/2} E^a_j\; E^b_k](x)\; \Omega
:=\lim_{s\to 0+} 
[(s+|\det(E)|^{(2-\delta)/2})^{-1} E^a_j\; E^b_k](x)\; \Omega
=0
\ee
Then the theorem applies, with the understanding that the limit 
$\epsilon\to 0$ at the level of the functions $F$ is accompanied 
by the limit $\kappa\to 0$. 

As a result we obtain the {\it hypersurface deformation groupoid (HDG)} 
with explicit action on the common dense and invariant domain 
$\cal D$ given by the span of the $w[F]\Omega$ 
\ba \label{3.38}
U(u) &:=& e^{-i\; D[u]},\;\;U(M):=e^{-i\; C_\delta[M]},\;\;
\nonumber\\
U(u) \; w[F]\; \Omega &=& w[(e^{X_u}\; K)(F)]\; \Omega
\nonumber\\
U(M) \; w[F]\; \Omega &=& w[(e^{X_M}\; K)(F)]\; \Omega
\ea
with $X_u, \; X_M$ the Hamiltonian vector field of $D[u], C_\delta[M]$
respectively.

The action of $U(u)$ in (\ref{3.38}) is in fact the same as in LQG 
which works even in the non-Abelian setting because $D[u]$ is linear in 
$A$. As $D[u]$ is also linear in $E$, the flow of $D[u]$ preserves the 
linearity in $F$. For this reason, $e^{s\;D[u]} K$ can be worked out 
in closed form
\be \label{3.39}
[(e^{s\; X_u}\; K)(F)]^a_j(x)=[e^{s L_u}\; F]^a_j(x),\;
[L_u F]^a_j(x)=
[(u^b\; F^a_j)_{,b}-u^a_{,b}\; F^b_j](x)
\ee
where $L_u$ is the Lie derivative acting on vector field 
densities of weight one. One can 
check the implication 
\be \label{3.40}
\partial_a F^a_j=0 \;\;\Rightarrow\;\; \partial_a (L_u \; F^a_j)=0
\ee
which means that solutions $w[F]\Omega$ of the Gauss constraint are mapped 
to solutions of the Gauss constraint. Furthermore the flow preserves  
the space of vector field densities. 

As far as $C_\delta(M)$ is concerned, the fact that $D(u)$ generates 
spatial diffeomorphisms and the phase space dependent integrand 
$C_\delta(x);\; C[M]=\int\; d^3x\; M(x) \; C_\delta(x)$ is a scalar
density of weight $\delta$ implies 
\be \label{3.41}
\{D(u),C_\delta(M)\}=-C_\delta(L_u M),\;\;
L_u M=u^a M_{,a}-(\delta-1)\; u^a_{,a}\; M
\ee
i.e. $M$ acquires the geometrical intepretation of a scalar density 
of weight $-(\delta-1)$. Acordingly the net weight of 
$M_\delta(E):=M |\det(E)|^{(\delta-2)/2}$ is always $-1$. It follows
\be \label{3.42}
\{C_\delta(M),E^a_j(x)\}=
2\;[(\epsilon_{jkl}\; E^b_k\; E^a_l\;M_\delta(E))_{,b}](x)
\ee
whence 
\be \label{3.43}
[(X_M\; K)^a_j(F)](x)
=2\;[(\epsilon_{jkl}\; F^b_k\; F^a_l\;M_\delta(F))_{,b}](x)
\ee
Obviously (\ref{3.43}) is no longer linear in $F$, polynomial only for 
$\delta=2$. However, for any $\delta$ (\ref{3.43}) is divergence free 
because $\epsilon_{jkl} F^b_k F^a_l$ is antisymmetric in in $a,b$. Thus
the flow $e^{s\; X_M}$ preserves the space of solutions to the Gauss 
constraint.

To illustrate the degree of comlexity of the flow, let us work out the 
first few orders for the simplest (polynomial) case $\delta=2$. 
We set 
\be \label{3.44}
[B_M(F,G)]^a_j:=2\epsilon_{jkl} \; [M\; (F^{[b}_k\; G^{a]}_l]_{,b}
\;\;\Rightarrow\;\;
B_M(F,G)=B_M(G,F),\;B_M(F,F)=(X_M\; K)(F),\; \partial_a [B_M(F,G)]^a_j=0
\ee
which maps a pair of triples of divergence free 
vector densities to another triple of 
divergence free vector densities. It is also symmetric in $F,G$.

As an aside, note that 
\be \label{3.45}
[B_M(F,G)]^a_j=-\epsilon^{abc}\; \partial_b\; (\omega_M)_c^j,\;
(\omega_M)_c^j=M\epsilon^{jkl} \epsilon_{cde} F^d_k G^e_l
\ee
Thus if $M$ transforms as a density of weight $-1$ and 
$F,G$ as vector densities of weight $+1$ then $\omega_M^j$ is 
a 1-form. 

We have 
\ba \label{3.47}
X_M\; K &=& B_M(K,K)
\nonumber\\
X_M^2\; K &=& 
B_M(X_M\; K,K)+B_M(K,X_M\; K)=2\;B_M(K,B_M(K,K))
\nonumber\\
X_M^3\; K &=& 
2\; B_M(X_M\; K, B_M(K,K))+2\;B_M(K, X_M B_M(K,K)) 
\nonumber\\
&=& 2\; B_M(B_M(K,K), B_M(K,K))+4\;B_M(K, B_M(K,B_M(K,K)))
\ea
In general $X_M^n$ is a nested polynomial of order $n+1$ in $K$ involving
$n$ bilinear forms $B_M$ in all possible ways. It may well be possible 
to find the recursion relation for the numerical 
coefficients among these possible terms but we will not need them for what 
follows. \\
\\
We end this section with the remark that for non-integer density weight 
or density weight smaller than two, the space $\cal F$ to which the 
$F$ belong is restricted to non-degenerate elements, that is 
$\det(F)\not=0$ anywhere, the quantum trace of the classical condition 
that the classical metric be non-degenrate. This will be further analysed 
in the next section.

\section{Dual representation}
\label{s4}

The unitary operators $U[u],\;U[M]$ are defined densely on the 
span $\cal D$ of the $w[F]\Omega$ (finite linear combinations of those), 
with $F$ non-degenerate, i.e. in $\cal H$ and not some space of 
distributions. They do not act weakly continuously there, thus their algebra
can only be compared to the exponentiated classical hypersurface 
deformation algebra and this is what we did in the previous section, thereby 
establishing anomaly freeness on $\cal H$ in the sense defined. 
One may obtain an infinitesimal 
action on a certain space $L\subset {\cal D}^\ast$ of distributions 
on $\cal D$ where ${\cal D}^\ast$ is the algebraic dual of $\cal D$ i.e.
all linear functionals without continuity conditions. A general 
element $l\in L$ maybe written
\be \label{4.1}
l=\sum_F\; l[F]\; <w[F]\Omega,.>_{{\cal H}}
\ee
and for any operator $A$ with dense and invariant domain $\cal D$ we define 
its dual $A'$ on $L$ by 
\be \label{4.2}
[A'\;l](w[F]\Omega):=l(Aw[F]\Omega)=\sum_{F'}\;l[F']\; 
<w[F']\Omega,\;A\;w[F]\Omega>
\ee
While the sum in (\ref{4.1}) is over uncountably many $F$, the condition that 
$A w[F] \Omega\in {\cal D}$ makes sure that (\ref{4.2}) is finite. We may 
thus construct $U'[u],\; U'[M], \; U'[M,N]$
\be \label{4.3}
[U'[u]\; l][w[F]\Omega]=l[(e^{X_u}\;K)(F)],\; 
[U'[M]\; l][w[F]\Omega]=l[(e^{X_M}\;K)(F)],\;
[U'[[M,N]]\; l][w[F]\Omega]=l[(e^{X_{M,N}}\;K)(F)]
\ee
where $X_u,\;X_M,\; X_{M,N}$ are the Hamiltonian vector fields of 
$D[u],\;C[M],\;\{C[M],C[N]\}$ respectively. 

For given $u,M$ we define the one parameter groups 
$s\mapsto U'[su],\;s\mapsto U'[sM]$ to be continuous 
in the $L,{\cal D}$ topology 
if 
$l[U[su]\psi],\;l[U[sM]\psi]$ is continuous in $s$ for all $l\in L,\psi\in 
{\cal D}$. By (\ref{4.3}) this is equivalent to the requirement that the 
coefficient functions $l$ are continuous on the chosen space ${\cal F}$
that we sum over. Consider $\cal F$ to be some standard space e.g. 
divergence free Schwartz 
functions so that the flows $e^{s\; X_u}, \; e^{s\; X_M}$ which formally 
involve spatial derivatives of arbirtraily high orders are well defined.
This is however not sufficient: Unless the density weight of $C_\delta$ is 
an integer larger than or equal to two, the functions $F$ must be 
everywhere regular, i.e. $\det(F)\not=0$ everywhere because the 
flow of the Hamiltonian constraint involves arbitrarily large negative powers 
of $|\det(F)|$. This is the condition of {\it quantum non-degeneracy} 
\cite{30}.

We now compute the infinitesimal generators 
\ba \label{4.4}
&& (-iD'[u]\; l)[F]:=\frac{d}{ds} l[(e^{s X_u} K)[F]=(X_u l)[F]=
\{D[u],l(K)\}_{K=F}
\nonumber\\
&& (-iC_\delta'[M]\; l)[F]:=\frac{d}{ds} l[(e^{s X_M} K)[F]=(X_M l)[F]
=\{C_\delta[M],l(K)\}_{K=F}
\ea
which are nothing but classical Poisson brackets with functions of $E$ only.
Therefore the the algebra of the $D'[u],D'[M]$ is precisely an 
anti-represention of the HDA $\mathfrak{h}$ since  
\be \label{4.5}
i\; (\{C_\delta(M),C_\delta(N)\}'\;l)[F]=
(\frac{d}{s}\;\frac{d}{dt} (U'([sM,tN]) l)[F])_{s=t=0}
=\{\{C_\delta(M),C_\delta(N)\},l(K)\}_{K=F}
\ee
and closes without anomalies irrespective of the density weight of the 
Hamiltonian constraint provided the functionals $l$ are restricted 
to smooth regular functions representing non-degenerate quantum metrics.

\section{Solutions to the constraints by groupoid averaging}
\label{s5}
 
A general strategy to solve quantum constraints is to use ``group averaging''
\cite{31}, that is, to construct a so-called anti-linear rigging map 
$\eta:\; {\cal D}\mapsto {\cal S}\subset {\cal D}^\ast$ that maps 
the common, dense invariant domain $\cal D$ of the constraints to a 
subspace $\cal S$ of algebraic distributions on $\cal D$ which is in the 
kernel of the dual to all constraints. That is 
\be \label{5.1}
(\eta \psi)[C_\alpha\psi']=0
\ee for all $\psi,\psi'\in {\cal D}$ and $\alpha$ is some index that labels
a complete set of constraints. In case that the constraints $C_\alpha$ are the 
self-adjoint generators of  
a Lie algebra $\mathfrak{l}$ we may pass to the corresponding unitary Lie 
group $\mathfrak{g}$ generated by composition of the 
$g=U[M]=\exp(-i \sum_\alpha M^\alpha C_\alpha),\; 
M^\alpha\in \mathbb{R}$ and if $\mathfrak{g}$ 
admits a left invariant, normalised Haar measure $\mu$ then we may 
set 
\be \label{5.2}
\eta \psi:=\int\; d\mu(g)\; <g\;\psi,.>_{{\cal H}}
\ee
which satisfied $(\eta\psi)[g\psi']=(\eta\psi)[\psi']$ by unitarity 
$g^\dagger=g^{-1}$ which may be 
considered as the exponentiated version of (\ref{5.1})). Furthermore,
\be \label{5.3}
<\eta \psi,\eta \psi'>_\eta:=(\eta \psi')[\psi]
\ee
defines an inner product on those solutions.

It appears that we have good chances to apply this fomalism given the theory 
of section \ref{s3} which provides us with unitarities $U[u],\; U[M]$ 
labelling the constraints. Unfortunately, the $U[u], U[M]$ do not 
generate a group but a groupoid. While for a group we have at least 
formally a composition law $U[M_1]\;U[M_2]=U[M_3]$ where $M_3$ is
in general an infinite Baker-Campbell-Hausdorff series in $M_1,M_2$ 
involving the structure constants of $\mathfrak{l}$, for a groupoid 
such a relation does not hold, ``words'' formed by taking products 
of the ``alphabet letters'' $U[M]$ are in general independent of each 
other. Thus, there can be no group structure, no Haar measure and no 
rigging map as above. 

Let $\cal A$ be the set of alphabet letters consisting of all $U[u], U[M]$
and let ${\cal W}_N$ be the set of words $w$ with $N$ letters of the 
form $w=a_1..a_N$ where $w$ is not reducible to a word with fewer letters,
by using the fact that the alphabet consists both $a,a^{-1}$.  
We may try to form a discrete sum 
\be \label{5.4}
\eta \psi=\psi+\sum_{N=1}^\infty \; \sum_{w\in {\cal W}_N}\; \omega(w)\;
<w\psi,.>
\ee
where we have included a ``weight'' function $\omega$. Then one may ask that 
\be \label{5.5}
(\eta \psi)[a\psi']=(\eta \psi)[\psi']\;\; \forall\;\; a\in {\cal A}
\ee
Since $w\in {\cal W}_N$ contains words $a\; w',\; w'\in  W_{N-1}$
we see that $a^\dagger w=a^{-1} w=w'$ can both increase and decrease
word length by one unit so (\ref{5.5}) does not lead to an immediate 
contradiction.
However, even in case that $\cal A$ has finitely many unitary letters 
$a_1,..,a_k$ which have the same finite order $a_l^n=1,\;l=1,..,k$ 
but otherwise are free (no other relations) there are relations for an
infinite number of words to check. We therefore consider this approach 
to the groupoid situation as impractical.

The idea well known in the literature \cite{37}
is to pass to equivalent constraints 
that do form an algebra. Given first class constraints $C_\alpha$ 
on a phase space 
with coordinates $(y_\alpha,x^\alpha;p_\mu,q^\nu)$ 
one may solve the constraints for 
the $y_\alpha$ and rewrite them in the form
\be \label{5.6}
\hat{C}_\alpha=y_\alpha+h_\alpha(x;p,q)
\ee
These constraints are strictly Abelian $\{\hat{C}_\alpha,\hat{C}_\beta\}=0$ and 
therefore can be subjected to group averaging. Among the caveats to this 
strategy is the fact that the function $h_\alpha$ in general has several 
branches unless the constraints $C_M$ involve the $y_M$ only linearly.
This caveat is actually absent for the U(1)$^3$ as in fact {\it all} 
momenta appear at most linearly in the constraints. 

In the U(1)$^3$ situation we may write the constraints just 
in terms of the curvatures $F_{ab}^j$ modulo a term proportional 
to the Gauss constraint which annhilates the states $w[F]\Omega$ as 
$F$ is divergence-free. Thus $F_{ab}^j$ only depends on the transversal 
parts $A_{a\perp}^j$ while the longitudinal parts 
$A_{a\parallel}=A_a^j-A_{a\perp}^j$ drop both from the states and the 
constraints. It is thus natural to select the four momenta 
$A_{a\perp}^j,\;j=1,2$ as the $y_\alpha$ and the two momenta 
$A_{a\perp}^3$ as the $p_\mu$ with corresponding conjugate 
configuration coordinates $E^{a\perp}_j, E^{a\perp}_3$ as the
$x^\alpha,q^\mu$ respectively. Precisely this description 
of the reduced phase space has been given in \cite{27} of which we provide 
some details in the next section.

Thus instead of the $U(M)=\exp(-i C_M)$ we consider the 
$\hat{U}(M)=\exp(-i \hat{C}_M),\hat{C}_M=\sum_\alpha M^\alpha \hat{C}_\alpha$
which inherit the action from section \ref{s3}
\be \label{5.7}
\hat{U}(M)\;w[F]\;\Omega=w[(e^{\hat{X}_M}\; K)(F)]\;\Omega
\ee
where $\hat{X}_M$ is the Hamiltonian vector field of $\hat{C}_M$ because 
the $\hat{C}_M$ is still linear in the momenta. 

Still we cannot just integrate over the $M^\alpha\in \mathbb{R}$
because the $\hat{U}(M)$ are weakly discontinuous unitarities which 
is why integrals with respect to $M$ of $<\psi,U(M)\psi'>$ would 
simply vanish as the matrix element is supported on Haar measure 
zero sets. We are thus forced to consider instead the discrete (i.e. 
summation) measure to construct the rigging map
\be \label{5.8}
\eta \psi=\sum_M\; <\hat{U}(M)\psi,.>_{{\cal H}}
\ee
similar to the LQG approach to averaging the spatial diffeomorphism group
\cite{38}. We have 
\be \label{5.9}
(\eta \psi)[\hat{U}(M')\psi']=\sum_M\; <\hat{U}(M-M')\psi,.>_{{\cal H}}
=(\eta \psi)[\psi']
\ee
where unitarity and Abelian nature of the $\hat{U}(M)$ was used.
 
For $\psi=w[F]\Omega,\; w[F]=\exp(-i<A,F>)$ this can 
be further detailed as follows: Using $\partial_a F^a_j=0$ we may 
split $F^a_j=F^{a\perp}_j$ into the components 
$R^\alpha=F^{a\perp}_j(x),;j=1,2$ and $S^\mu=F^{a\perp}_3$ 
and we split the coordinate functions $[K(E,A)]^a_j(x)=E^a_j(x)$ 
into the corresponding parts $T^\alpha, Q^\mu$ where it is understood 
that summation over $\alpha,\mu$ includes an integral over $x$. In 
this notation $w[F]=w[(R,S)]=\exp(-i<y,R>-i<p,S>)$. Then
\ba \label{5.10}
&&\hat{U}(M)\; w[F]\;\Omega 
=\hat{U}(M)\; w[(R,S)]\;\Omega 
=w[(e^{\hat{X}_M} K)(F)]\;\Omega 
\nonumber\\
&=& w[((e^{\hat{X}_M}\; T)(R,S), (e^{\hat{X}_M}\; Q)(R,S))]\;\Omega 
=w[(R+M, (e^{\hat{X}_M}\; Q)(R,S)]\;\Omega 
\ea
where $(\hat{X}_M T)(R,S)=(M+T)(R,S)=M+R$ was used ($M$ is the constant 
function) and $\hat{X}_M$ is the Hamiltonian vector field of $\hat{C}_M$. 
Thus for a general function 
\be \label{5.11}
\psi=\sum_F\; \psi(F)\; w[F]\;\Omega
=\sum_{R,S}F\; \psi(R,S)\; w[(R,S)]\;\Omega
\ee
with $\psi(F)\not=0$ for at most countably many $F$ we have 
\ba \label{5.12}
&&\sum_M\; \hat{U}(M)\; \psi
=\sum_{R,S,M}\; \psi(R,S)\; w[(R+M,(e^{\hat{X}_M}\; Q)(R,S))]\Omega
\nonumber\\
&=&\sum_{R,S,\tilde{M}}\; \psi(R,S)\; 
w[(\tilde{M},(e^{\hat{X}_{\tilde{M}-R}}\; Q)(R,S))]\Omega
\nonumber\\
&=&\sum_{R,S,\tilde{M}}\; \psi(R,S)\; 
w[(\tilde{M},(e^{\hat{X}_{\tilde{M}}}\; e^{\hat{X}_{-R}}\; Q)(R,S))]\Omega
\ea
where we have introduced a new summation variable $\tilde{M}=M+R$ 
in the second step and in the third we used that the $\hat{X}_M$ are 
Abelian.

Next we have by unitarity and due to the Abelian property
\ba \label{5.13}
&& <\sum_M \; \hat{U}(M) \psi,\; 
<\sum_{M'} \; \hat{U}(M') \psi'>_{{\cal H}}
=
\sum_{M,M'}\; <\hat{U}(M-M') \psi,\psi'> 
\nonumber\\
&=&
[\sum_{M'} \; 1]\;\; [\sum_{M}\; <\hat{U}(M) \psi,\psi'>] 
=:{\rm Vol}({\cal M})\; <\eta \psi,\eta\psi'>_\eta
\ea
where $\cal M$ is the space of $M$ that we sum over which is the 
same as the space ${\cal R}$ of $R$. Accordingly we have for the 
rigging inner product
\ba \label{5.14}
{\rm Vol}({\cal M})\; <\eta \psi,\eta\psi'>_\eta
&=& \sum_{R,S,M,R',S',M'} \; \psi^\ast(R,S)\;\psi'(R',S')
\delta_{M,M'}\; 
\delta_{(e^{X_M}\; e^{\hat{X}_{-R}} Q)(R,S),(e^{\hat{X}_{M'}}\; 
e^{\hat{X}_{-R'}} Q)(R',S')}
\nonumber\\
&=& {\rm Vol}({\cal M})\;
\sum_{R,S,R',S'} \; \psi^\ast(R,S)\;\psi'(R',S')\;
\delta_{([e^{\hat{X}_Z}\; Q]_{Z=-T})(R,S),([e^{\hat{X}_Z}\; Q]_{Z=-T})(R',S')}
\nonumber\\
&=& {\rm Vol}({\cal M})\;
\sum_{R,S,R',S'} \; \psi^\ast(R,S)\;\psi'(R',S')\;
\delta_{O_Q(R,S),O_Q(R',S')}
\nonumber\\
&=& {\rm Vol}({\cal M})\;\sum_O\;
[\sum_R\; \psi(R,O_Q^{-1}(R,O))]^\ast\;
[\sum_{R'}\; \psi'(R',O_Q^{-1}(R',O))]
\ea
where we used that $e^{\hat{X}_M}$ is invertible in the second step
and rewrote the arguments in the remaining Kronecker, in the third 
we introduced the notation
\be \label{5.15}
O_H:=[e^{\hat{X}_Z} \; H]_{Z=-T} 
\ee
which is the projection of the phase space function $H$ to the relational
gauge invariant observable corresponding to the gauge $T=0$ 
\cite{37} and in the last we solved the Kronecker using
$S=O_Q^{-1}(R,O)$ which is the inversion of $O_Q(R,S)=O$ at fixed $O$.

It follows that the physical Hilbert space obtained by the rigging 
method can be identified with the Hilbert space with dense span 
given by the $\sum_O\; \hat{\psi}(O)\; w[(0,O)]\Omega$ via the 
the unitary map $V:\eta \psi\mapsto \hat{\psi}$ with 
\be \label{5.16}
\hat{\psi}(O)=\sum_R\; \psi(R,O_Q^{-1}(R,O))
\ee
As expected, the physical states depend only on (relational) Dirac
observables $O$ corresponding to the configuration variables $q^\mu$ not 
subject to the gauge fixing $x^\alpha=0$ that defines these relational 
obeservables. As is well known \cite{37}, the phase space defined 
by the relational observables corresponding to $q^\mu,\;p_\mu$ and the 
gauge fixing condition $x^\alpha=0$ is completely equivalent to reduced 
phase space obtained by solving the constraints for $y_\alpha$ in the 
gauge $x^\alpha=0$ and keeping $q^\mu,p_\mu$ as ``true degrees of freedom''.
Whenever that gauge fixing is not complete but leaves a 1-parameter family
of residual gauge transformations, one may use the generator of those 
residual transformations as physical or reduced Hamiltonian. We thus 
turn to the reduced phase space description in the next section and 
compute and quantise the corresponding physical Hamiltonian.

\section{Reduced phase space, physical Hamiltonian and quantisation}
\label{s6}

The classical part of this section is a slightly generalised version of 
a part of \cite{27}. We thus will be brief and refer the reader to 
\cite{27} for details.\\
\\
The classical constraints can be written in density $\delta$ form 
\be \label{6.1}
D^\delta_j=F_{ab}^k\; E^a_j\;E^b_k\; |\det(E)|^{(\delta-2)/2},\;
C^\delta=F_{ab}^j\; \epsilon_{jkl} E^b_k,\; E^c_l\;|\det(E)|^{(\delta-2)/2}
\ee
with $F_{ab}^j=2\partial_{[a}\; A_{b]}^j$ where $G_j=\partial_a\; E^a_j=0$
was used so that $E^a_j=E^{a\perp}_j$ is already transversal. 
The gauge condition we wish to impose is on 
$E^{a\perp}_\alpha,\; \alpha=1,2$ and 
correspondinly we want to solve (\ref{6.1}) for $A_{a\perp}^\alpha$. 
We can use the Gauss constraint to install the three Coulomb gauges 
$A_{a\parallel}^j=A_a^j-A_{a\perp}^j=0,\; j=1,2,3$ as it was done in \cite{27}. 
Note that while $\partial_a E^a_j=0$ defining the solutions $E^{a\perp}_j$
does not require a background metric, the definition of $A_{a\perp}^j$ does 
require a background metric. We will pick one and proceed as in \cite{27}.

In this paper we slighly deviate from \cite{27} 
and just impose two Coulomb gauges
$A_{a\parallel}^\alpha=0,\;\alpha=1,2$ and also only solve two Gauss 
constraints $G_\alpha=\partial_a E^a_\alpha=0$. Thus we keep as true 
degrees of freedom the three canonical pairs $A_a^3,\; E^a_3$ and 
keep the Gauss constraint $G_3=\partial_a E^a_3$ still in place, to be dealt 
with later. We will see that after having reduced the six constraints 
$G_\alpha, D_a, C$ we obtain a theory with a reduced Hamiltonian $H$ 
and a constraint $G:=G_3$ under which $H$ is invariant and $H$ will 
be linear in $F_{ab}^3$ and non-linear in $E^a_3$ \cite{27}. The resulting 
theory is thus a special type of {\it non-linear, self-interacting, local 
electrodynamics}. \\
\\
Proceeding to the details, in this paper we consider just the case 
$\sigma=\mathbb{R}^3$. More general manifolds can be treated with 
adapted methods. We pick a global Cartesian coordinate system 
$x^a,\;a=1,2,3$ on $\sigma$.
The gauge condition on the solution $E^{a\perp}_\alpha$ of 
$\partial_a E^a_\alpha=0$ that we pick is (remember $\alpha=1,2$) 
\be \label{6.2}
E^a_\alpha=\delta^a_\alpha
\ee
To distinguish the coordinate directions from the frame directions, 
we write $x,y,z$ for $a=1,2,3$. Then (\ref{6.2}) is a compact notation for 
$E^x_1=E^y_2=1,\;E^x_2=E^y_1=E^z_1=E^z_2=0$. Obviously the Gauss constraints
$\partial_a E^a_\alpha=0$ are identically satisfied. The reason 
why we do not impose $E^a_\alpha=0$ is that we want to keep the model 
as close as possible to GR and thus insist on non-degenerate 
metrics, thus $\det(\{E^a_j\})=E^z_3\not=0$ is still possible. 

We need to show that the six conditions 
(\ref{6.2}) can be always installed no matter 
from which configuration of the $E^a_\alpha$ we start from and that 
the six constraints $G_\alpha, D_a, C$ can always 
be solved for. To do this 
we rewrite (\ref{6.1}) in terms of the density -1 inverse 
$E_a^j$ and the density +1 magnetic 
field $B^a_j$
\be \label{6.3}
\det(E)\; E_a^j=\frac{1}{2}\epsilon_{abc}\epsilon^{jkl} E^b_k E^c_l,\;\;
2\;B^a_j=\epsilon^{abc} F_{bc}^j
\ee
from which
\be \label{6.4}
D^0_j=\epsilon_{jkl} B^a_k\; E_a^l,\; C^0=B^a_j\; E_a^j
\ee
which have density weight zero.
As $E^a_j$ is non-degenerate we can decompose 
$B^a_1=u^j \; E^a_j,\;B^a_2=v^j \; E^a_j$ and find 
\ba \label{6.5}
D^0_1 &=& B^a_2 E_a^3-B^a_3 E_a^2=v^3-B^a_3 E_a^2
\nonumber\\
D^0_2 &=& B^a_3 E_a^1-B^a_1 E_a^3=B^a_3 E_a^1-u^3
\nonumber\\
D^0_3 &=& B^a_1 E_a^2-B^a_2 E_a^1=u^2-v^1
\nonumber\\
C^0 &=& B^a_1 E_a^1+B^a_2 E_a^2+B^a_3 E_a^3=u^1+v^2+B^a_3 E_a^3
\ea
which can be solved algebraically for $u^3,v^1,v^2,v^3$ 
thus 
\ba \label{6.6}
B^a_1 &=& u^1 \; E^a_1+ u^2\; E^a_2+ [B^b_3 E_b^1] E^a_3
\nonumber\\
B^a_2 &=& u^2 \; E^a_1+ (B^b_3 E_b^3-u^1)\; E^a_2+ [B^b_3 E_b^2] E^a_3
\ea
The coefficients $u^1,u^2$ are constrained by the Bianchi identities 
$\partial_a B^a_\alpha=0$
\ba \label{6.7}
&& \delta^{\alpha\beta} E^a_\alpha\; u_{\beta,a}=-[(B^b_3 E^b_1) E^a_3]_{,a} 
=:-t\nonumber\\
&& \epsilon^{\alpha\beta} E^a_\alpha\; u_{\beta,a}=
-[(B^b_3 E^b_3) E^a_2+(B^b_3 E^b_2) E^a_3]_{,a} 
=:-r
\ea
where $\partial_a E^a_\alpha=0$ was used and $\epsilon^{\alpha\beta}$ 
is the skew symbol in 2 dimensions. We introduce with $I,J,K,..\in 
\{x,y\}$ the 2-dimensional divergence and curl
\be \label{6.8}
d:=\delta^{\alpha\beta} E^I_\alpha\; u_{\beta,I},\;\;
c:=\epsilon^{\alpha\beta} E^I_\alpha\; u_{\beta,I},\;\;
\ee
then 
\be \label{6.9}
\delta^{\alpha\beta}\; E^z_\alpha\; u_{\beta,z}=-(t+d),\; 
\epsilon^{\alpha\beta}\; E^z_\alpha\; u_{\beta,z}=-(r+c),\; 
\ee
The two equations (\ref{6.9}) provide a quasi-linear (even linear)
first order PDE system in two functions $u_1,u_2$. By the 
Cauchy-Kowalewskaja (CK) theorem \cite{38} maximal analytic 
and unique solutions of (\ref{6.9}) exist for real analytic 
``initial data'' $u_\alpha^0$ on a surface transversal to the z coordinate 
lines (e.g. the surface $z=0$), real analytic inhomogeneities $t,d$ and 
real analytic $E^a_\alpha$ provided that the matrix 
\be \label{6.10}
\left( \begin{array}{cc}
E^z_1 & E^z_2 \\
-E^z_2 & E^z_1   
\end{array}
\right)
\ee
is non-degenerate i.e. $\delta^{\alpha\beta} E^z_\alpha E^z_\beta>0$
(non-characteristic condition). 
In that case we can solve (\ref{6.9}) for $u_{\alpha,z}$ and 
then can compute the Taylor expansion of $u_\alpha$ off $z=0$ 
by CK iteration of the PDE system. We can argue the same 
way by solving instead for the $x$ and $y$ derivatives. It is 
not possible that $\delta^{\alpha\beta} E^a_\alpha E^a\beta=0$ for 
more than one of  
$a=x,y,z$ because otherwise this would imply that say
$E^I_\alpha=0;\; I=x,y,; \alpha=1,2$
and the metric would be degenerate. Thus w.l.g. we may pick the 
$z$ direction as long as the non-characteristic condition above holds.

We may solve (\ref{6.9}) also in case  
that $E^z_1=E^z_2=0$ as long as the metric is non degenerate. For this requires 
that $\det(\{E^I_\alpha\}), E^z_3\not=0$ and 
$\partial_a E^a_\alpha=\partial_I E^I_\alpha=0$ ensures that there exist 
functions $e_\alpha$ with $E^I_\alpha=\epsilon^{IJ} e_{\alpha,J}$ by 
simple connectedness of $\sigma=\mathbb{R}^3$. It follows 
that $x,y,z\mapsto (\hat{x},\hat{y},\hat{z})=
(e_1(x,y,z),e_2(x,y,z),z)$ is a diffeomorphism. 
Furthermore, as (\ref{6.9}) vanishes identically we 
have with $u_\alpha(x,y,z)=:\hat{u}(e_1(x,y,z), e_2(x,y,z),z)$ 
\ba \label{6.11}
d &=& \delta^{\alpha\beta}\; \epsilon^{IJ}\; e_{\alpha,J}\; u_{\beta,I}
=\det(E^I_\alpha)\; \epsilon^{\alpha\beta} \partial_\alpha \hat{u}_\beta
=-t
\nonumber\\
c &=& \epsilon^{\alpha\beta}\; \epsilon^{IJ}\; e_{\alpha,J}\; u_{\beta,I}
=\det(E^I_\alpha)\; \delta^{\alpha\beta} \partial_\alpha \hat{u}_\beta
=-r
\ea
Switching to those coordinates and denoting by $\hat{t},\hat{r}$ the 
transformed functions $t/\det(E^I_\alpha),r/\det(E^I_\alpha)$ 
the solution is obtained as 
\be \label{6.12}
\hat{u}_\alpha=\Delta_2^{-1}[\epsilon^{\beta\alpha} \hat{t}_{,\beta}
+\hat{r}_{,\alpha}] +\epsilon^{\alpha\beta} \hat{h}_\beta
\ee
where $\Delta_2=\delta^{\alpha\beta}\partial_\alpha\partial_\beta$ and 
$h_\alpha$ is a homogeneous solution of (\ref{6.11}) 
\be \label{6.13}
\hat{h}_{1,1}-\hat{h}_{2,2}=\hat{h}_{1,2}+\hat{h}_{2,1}=0
\ee
i.e. a solution of the Cauchy-Riemann (CR) equations in $\hat{x},\hat{y}$
which means that $\hat{h}_1+i\hat{h}_2$ is a holomorphic function in 
$\hat{x}+i\hat{y}$. 

This shows that unique (up to the holomorphic function freedom
at characteristic surfaces) and maximal analytic solutions to the 
constraints $D_j=0, C=0$ in terms of $B^a_\alpha$ always exist if the 
$B^a_3, E^a_j$ are real analytic which can in principle be computed 
to arbitrary precision using Taylor expansion. Solutions may also 
exist outside the analytic category but this will not be of relevance for
what follows.\\
\\
\\
To see whether the gauge $E^a_\alpha=\delta^a_\alpha$ can be installed
we consider a general gauge transformation of $E^a_\alpha$
\ba \label{6.14}
\delta E^a_\alpha(x)
&=&
\{\int\;d^3y\;[\xi^j\; D^0_j+N\; C^0\delta](y),E^a_\alpha(x)\}
=\{\int\;d^3y\;B^b_j[\epsilon_{jkl}\;E_b^k \xi^l+N\; E_b^j](y),
E^a_\alpha(x)\}
\nonumber\\
&=& \epsilon^{abc}\;\partial_b[\epsilon_{\alpha kl} E_c^k \xi^l+
N\delta_\alpha^j E_c^j](x)
\ea
which shifts $E^a_\alpha$ by a divergence free vector density thus 
preserving the Gauss constraint $\partial_a E^a_\alpha=0$. If we work 
with density weight $\delta$ constraints then the relation between 
$\xi^j,N$ in (\ref{6.14}) and the density weight zero shift and 
lapse functions $n^a,n$ is 
\be \label{6.14a}
n^a\; |\det(E)|^{(\delta-1)/2}=\xi^j\; E^a_j\;|\det(E)|^{-1},\;
n\; |\det(E)|^{(\delta-2)/2}=\frac{N}{2}\;|\det(E)|^{-1},\;
\ee
To install 
$E^a_\alpha=\delta^a_\alpha$ we must have $\delta E^a_\alpha = 
E^a_\alpha -\delta^a_\alpha=\Delta^a_\alpha$ with 
$\partial_a \Delta^a_\alpha=0$. Since $\mathbb{R}^3$ is simply connected 
we find $w_{a\alpha}$ with $\Delta^a_\alpha=\epsilon^{abc} \partial_b 
w_{c\alpha}$. We could impose w.l.g. 
$\delta^{ab} \partial_a w_{b\alpha}=0$ but this will not be needed. It follows 
that (\ref{6.14}) is solved by 
\be \label{6.15}
\epsilon_{\alpha kl} E_a^k \xi^l+N\delta_\alpha^j E_a^j
=w_{a\alpha}+\partial_a g_\alpha
\ee
where $g_\alpha$ are free functions. These are six equations for six 
free functions $\xi^j, N, g_\alpha$ and we can solve them as follows:\\
Contracting (\ref{6.15}) by $E^a_j$ we obtain the equivalent system
\be \label{6.16}
\epsilon_{\alpha jk} \xi^k+N\delta_\alpha^j 
=E^a_j[w_{a\alpha}+\partial_a g_\alpha]=:Z^j_\alpha
\ee
We first solve algebraically for $\xi^j, N$. For $j=3$ and $j=\beta$ 
respectively
\ba \label{6.17}
&&
-\epsilon_{\alpha\beta} \xi^\beta=Z^3_\alpha
\nonumber\\
&& \epsilon_{\alpha\beta} \xi_3+N\delta_{\alpha\beta} 
Z_{\beta\alpha}\equiv Z^\beta\;\alpha
\ea
which yields
\be \label{6.18}
\xi^\alpha=\epsilon^{\alpha\beta} Z^3_\beta,\; 
\xi_3=-\frac{1}{2}\epsilon^{\alpha\beta} \; Z_{\alpha\beta},\;
N=-\frac{1}{2}\delta^{\alpha\beta} \; Z_{\alpha\beta},\;
\ee
and that $Z_{\alpha\beta}$ does not have a trace free symmetric part
\be \label{6.19}
Z_{(\alpha\beta)}-\frac{1}{2}\delta_{\alpha\beta}\delta^{\gamma\delta}
Z_{\gamma\delta}=0
\ee
The system (\ref{6.19}) only involves $g_\alpha$. Its independent 
ingredients are only two equations 
\be \label{6.20}
Z_{12}+Z_{21}=0,\; Z_{11}-Z_{22}=0
\ee
which maybe written as 
\be \label{6.21}
E^a_1[w_{a2}+\partial_a g_2]+E^a_2[w_{a1}+\partial_a g_1]=0,\;
E^a_1[w_{a1}+\partial_a g_1]-E^a_2[w_{a2}+\partial_a g_2]=0
\ee
With the relabelling 
\be \label{6.22}
u_1:=g_2,\; w_2:=g_1,
t:=E^a_1 w_{a2}+E^a_2 w_{a1}
r:=E^a_1 w_{a1}-E^a_2 w_{a2}
\ee
the system (\ref{6.21}) turns into exactly the system (\ref{6.7}) which 
therefore has a unique maximal analytic solution $g_\alpha$ given suitable 
initial conditions. Having obtained such a solution $g_\alpha$ 
the gauge is installed by using that $g_\alpha$ in (\ref{6.18}).

Once the gauge is installed, we ask what residual gauge transformations 
leave $E^a_\alpha=\delta^a_\alpha$ invariant. Then we must solve 
(\ref{6.21}) with $w_{a\alpha}=0$ and $E^a_\alpha=\delta^a_\alpha$
\be \label{6.23}
g_{2,x}+g_{1,y}=0,\; g_{1,x}-g_{2,y}=0
\ee
i.e. $g=g_1+i g_2$ is a holomorphic function of $x+i y$ and its $z$ dependence
is not constrained. Then (\ref{6.18}) becomes with 
$Z^3_\alpha=E^a_3 g_{\alpha,a},\;Z_{\beta\alpha}=\partial_\alpha g_\beta$
\be \label{6.24}
\xi^\alpha=\epsilon^{\alpha\beta} E^a_3\; \partial_a g_\alpha,\; 
\xi_3=-\frac{1}{2}\epsilon^{\alpha\beta} \; \partial_\alpha g_\beta,\;
N=-\frac{1}{2}\delta^{\alpha\beta} \; \partial_\alpha g_\beta,\;
\ee
%
The relation between $\xi^j,N$ and the shift and lapse fields $n^a,n$
with density
weight zero is given in (\ref{6.14a}) for the density $\delta$ form 
of the constraints. In the given gauge $\det(E)=E^z_3$ and  
$|\det(E)|^{(1+\delta)/2}\;n^a=
E^a_3 \xi^3+\delta^a_\alpha \xi^\alpha,\;
|\det(E)|^{\delta/2}\;n=\frac{N}{2}$. If we want 
to keep the relation with GR intact then we want that $n^a$ decays at 
infinity while $n$ approaches a constant $c$
as to obtain the Minkowski metric 
asymptotically with Minkowski/Euclidian
time coordinate $x^0=ct$ where $c>0$ is some 
parameter which has the value of the speed of light when $t$ is the 
usual asymptotic time but we allow variable $c$ corresponding to 
1-parameter reparametrisation freedom $t\mapsto t'=ct/c'$. 
Furthermore 
\be \label{6.25}
Q^{ab}-\delta^{ab}=\delta^{jk} E^a_j E^b_k-\delta^{ab}
=E^a_3 E^b_3+\delta^a_1 \delta^b_1+\delta^a_2\delta^b_2-\delta^{ab}
=E^a_3 E^b_3-\delta^a_3 \delta^b_3
\ee
also decays, i.e. $E^z_\alpha$ and $E^z_3-1$ decay. Thus 
$\xi^j$ must decay while $N$ approaches $2c$. Since 
$-2\xi_3=g_{2,x}-g_{1,y}=2g_{2,x}=-2g_{1,y}$ and 
$\xi_\alpha=\epsilon^{\alpha\beta} E^a_3 g_{\beta,a}$ the decay 
of $\xi_j$ can be ensured by demanding $g_{\alpha,z}=0,\;
g_{2,x}=g_{1,y}=0$ which means that $g_\alpha$ just depends on $x^\alpha$
and by holomorphicity $g_{1,x}=g_{2,y}=-c$ is a constant. Then $N=2c$ 
and $n$ approaches $c$ asymptotically indeed. Moreover, 
$\xi^3=0, \xi^\alpha=-c\epsilon^{\alpha\beta} \delta_{a\beta} E^a_3$.
As shown in \cite{27}
the analysis can be completed by providing suitable decay behaviour 
of all fields in order to make all constraints functionally differentiable 
and finite.

Thus, as intended we obtain a 1-parameter freedom of residual gauge 
transformations which give rise to a reduced Hamiltonian $H$. This
is defined as follows. Let $G$ be any functional of the true degrees of 
freedom $E^a_3, A_a^3$ and let $n_c, n^a_c$ be the 1-parameter family 
of lapse and shift functions just found. Let 
$\bar{E}^a_\alpha=\delta^a_\alpha$ be the constrained/gauge fixed 
pure gauge configuration degrees of freedom and  
$\bar{A}_a^\alpha$ be the constrained/gauge fixed 
pure gauge momentum degrees of freedom whose existence we have demonstrated 
above but whose explicit expression will not be needed in what follows.
Then  
\be \label{6.26}
\{H^\delta,G\}=\{\vec{D}^\delta(\vec{n})
+C^\delta(n),G\}_{n=n_c,\vec{n}=\vec{n}_c,
E^a_\alpha=\bar{E}^a_\alpha,A_a^\alpha=\bar{A}_a^\alpha}
\ee
where $D^\delta,C^\delta$ are taken in the density weight $\delta$ form.
The Hamilltonian determined by (\ref{6.26}) has been computed in
\cite{27} for $\delta=1$ 
using an abstract argument valid for general theories at most 
linear in the momenta. To make exposition self-contained, we provide here an 
alternative argument which uses the concrete structure of our constraints.
We expand 
\be \label{6.27}
C^\delta_\mu(n^\mu):=\vec{D}^\delta(\vec{n})+C^\delta(n)
=\int\; d^3x\; n^\mu[B^a_\alpha\; h_{a\mu}^{\delta,\alpha}+h^\delta_\mu]
\ee
where $n^0=n$ and $h_{a\mu}^{\delta,\alpha}$ depends only on $E^a_j$ while 
$h_\mu^\delta$ depends linearly on $B^a_3$ and on $E^a_j$. We denote 
by $\bar{B}^a_\alpha$ the solution of (\ref{6.27}) at 
$E^a_\alpha=\bar{E}^a_\alpha$ and denote the values of 
$h_{a\mu}^{\delta,\alpha},\;h_\mu$ by
$\bar{h}_{a\mu}^{\delta,\alpha},\;\bar{h}_\mu$ resepectively. Moreover,
the solution of the gauge stability 
\be \label{6.28}
\{C^\delta_\mu(n^\mu),E^a_\alpha\}=0=\epsilon^{abc}\partial_b[ n^\mu\; 
h_{c\mu}^{\delta,\alpha}\;\;\Rightarrow\;\;n^\mu h_{a,\mu}^{\delta,\alpha}]
=\partial_a g^{\delta,\alpha}
\ee
at $E^a_\alpha=\bar{E}^a_\alpha$ is denoted by $\bar{n}^\mu,\; 
\bar{g}_\alpha^\delta$. Thus we have
\be \label{6.29}
\bar{B}^a_\alpha \; \bar{h}_{a\mu}^{\delta,\alpha}+\bar{h}_\mu=0,\;\;
\bar{n}^\mu \;\bar{h}_{a,\mu}^{\delta,\alpha}
=\partial_a \bar{g}^{\delta,\alpha} 
\ee
Thus 
\ba \label{6.30}
\{H^\delta,G\} 
&=& 
\int\; d^3x\;\bar{n}^\mu\;
[\bar{B}^a_\alpha\;
\{h_{a\mu}^{\delta,\alpha},G\}_{E^a_\alpha=\bar{E}^a_\alpha}
+\{h_\mu^\delta,G\}_{E^a_\alpha=\bar{E}^a_\alpha}]
\nonumber\\
&=& 
\int\; d^3x\;\bar{n}^\mu\;
[\bar{B}^a_\alpha\;
\{\bar{h}_{a\mu}^{\delta,\alpha},G\}
+\{\bar{h}_\mu^\delta,G\}]
\nonumber\\
&=& 
\int\; d^3x\;
[\bar{B}^a_\alpha\;(
\{\bar{n}^\mu\;\bar{h}_{a\mu}^{\delta,\alpha},G\}
-\bar{h}_{a\mu}^{\delta,\alpha}\;\{\bar{n}^\mu,G\})
+\bar{n}^\mu\;
\{\bar{h}_\mu^\delta,G\}]
\nonumber\\
&=& 
\int\; d^3x\;
[\bar{B}^a_\alpha\;
\partial_a\{\bar{g}_\alpha^\delta,G\}
+\bar{h}_{\mu}^\delta\;\{\bar{n}^\mu,G\})
+\bar{n}^\mu\;
\{\bar{h}_\mu^\delta,G\}]
\nonumber\\
&=& 
\int\; d^3x\;\{\bar{n}^\mu\;\bar{h}^\delta_\mu,G\}]
\ea
where in the first step we used that the Poisson bracket must involve the 
constraint  and that $\{B^a_\alpha,G\}=0$, in the second we used 
that while $h_{a\mu}^{\alpha,\delta},h_\mu$ depend on $E^a_\alpha$
before fixing to $\bar{E}^a_\alpha$ the Poisson bracket with $G$ only 
cares about the $E^a_3$ dependence, in the third we used the Leiniz rule,
in the fourth we used both identities (\ref{6.29}), in the fifth we 
integrated by parts (the decay behaviour of the fields excludes 
a boundary term \cite{27}), used the Bianchi identity on the magnetic 
field and applied the Leibniz rule again. It follows that
\ba \label{6.31}
H^\delta &=& \int\;d^3x\; \bar{n}^\mu\;\bar{h}_\mu
\nonumber\\
&=& \int\; d^3x\; [\bar{\xi}_j\epsilon^{j3l} B^a_3 \bar{E}_{al}+
\bar{N}\;B^a_3\; \bar{E}_a^3]   
\nonumber\\
&=& \int\; d^3x\; [\bar{\xi}_\alpha\epsilon^{\alpha 3l} B^a_3 \bar{E}_{al}+
\bar{N}\;B^a_3\; \bar{E}_a^3]   
\nonumber\\
&=& \int\; d^3x\; [\bar{\xi}_\alpha\epsilon^{\beta\alpha} B^a_3 
\bar{E}_{a\beta}+\bar{N}\;B^a_3\; \bar{E}_a^3]   
\nonumber\\
&=& c\int\; d^3x\; 
[-\delta_b^\alpha E^b_3\;B^a_3 \bar{E}_{a\alpha}+2\;B^a_3\; \bar{E}_a^3]   
\nonumber\\
&=& c\int\; d^3x\;[E^z_3]^{-1}\; 
[-\delta_b^\alpha E^b_3\;B^a_3 (\delta_a^\alpha E^z_3
-\delta_a^3\;E^\alpha_3)+2\;B^z_3]
\nonumber\\
&=& c\int\; d^3x\;[E^z_3]^{-1}\; 
[-\delta_{IJ} B^I_3\; E^J_3\;E^z_3+B^z_3\;(2+\delta_{IJ} E^I_3\; E^J_3)] 
\nonumber\\
&=:& <B,f(E)>
\ea 
where we used 
\be \label{6.32}
|\det(\bar{E})| \bar{E}_a^3
=\delta_a^z,\;
|\det(\bar{E})| \bar{E}_a^\alpha
=[-\delta_a^3\;E^\alpha_3=
+\delta_a^\alpha E^z_3]
\ee
and the notation 
\be \label{6.32a}
f_a(E)=\frac{1}{E^z_3}[\delta_a^z(2+\delta_{IJ} E^I_3 E^J_3)
-\delta_a^I \delta_{IJ} E^J_3\; E^z_3]
\ee
Note that $H=H^\delta$ is independent of the choice of $\delta$. This 
is true in general: The reduced Hamiltonian does not care about the 
density weight which was used for the constraint because one can redefine 
lapse and shift functions to unity density weight and finds unique 
values for those when gauge fixing.\\
\\
It follows that the reduced theory is a kind of non-linear electrodynamics
with gauge invariant electric and magnetic fields $E^a:=E^a_3, 
B^a:=B^a_3=\epsilon^{abc} \partial_b A_c, A_a=A_a^3$ which is subject to
the Gauss constraint $G=\partial_a E^a=G_3$. As the Hamiltonian 
is gauge invariant and linear in momentum $A_a$ and since the 
configuration observables $O$ that were extracted via the rigging map 
of the previous section are to be identified with the variables $E^a_3$,
in order to match the representation of the reduced phase space quantised
theory with the Dirac quantised theory we pick the following 
physical Hilbert space:\\
The vacuum is annihilated by the electric field $E^a\Omega=0$ and 
excited states are given by $w[F]=\exp(-i<F,A>)\Omega$ where 
\be \label{6.33}
<F,A>=\int\; d^3x\; F^a\; A_a
\ee
The Gauss constraint is solved by asking that $F$ is divergence free and 
the non-degeneracy is met by asking that $F^z$ is nowhere vanishing.
Since the Hamiltonian is linear in momentum $A_a$ its quantisation 
in this representation is straightforward by the general results of section 
\ref{s3}: Again $H$ itself is not defined but its unitary, weakly 
discontinuous one parameter group is
\be \label{6.34}
U(t)\;w[F]\Omega:=
\exp(-it H)\;\;w[F]\Omega=w[(e^{t X_H} K)(F)]\Omega
\ee
with the Hamiltonian vector field $X_H$ on the phase space with canonical pair
$(A_a,E^a)$ and $K^a(A,E)=E^a$ is the coordinate function on the phase 
space which does not depend on $A$ and thus we write $K(F)=K(.,F)=F^a$.
We have 
\be \label{6.35}
(X_H K)^a(F)=\epsilon^{abc}\partial_b f_c(F) 
\ee
which shows that the Hamiltonian flow of $H$ preserves the divergence 
free vector field densities. Let $\Lambda_1$ be the space of 1-foms and 
$V^1$ the space of divergence free 
vector (pseudo) densities then we have maps
\be \label{6.36}
f:\; V^1\to \Lambda_1;\; F\mapsto f(F),\; 
{\rm rot}:\; \Lambda_1\to V^1;\; F\mapsto f(F),\; 
\ee
where $f$ is given by (\ref{6.32a}). 
Then $X_H={\rm rot}\circ f$ and the first few terms of the Taylor 
expansion of $(e^{t X_H}\; K)$ are 
\be \label{6.37}
X_H\; K={\rm rot}[f(K)],\;       
X_H^2\; K={\rm rot}[f({\rm rot}[f(K)])],\;       
X_H^3\; K={\rm rot}[f({\rm rot}[f({\rm rot}[f(K)])])]
\ee

The relation (\ref{6.34}) states that, similar to coherent states for the 
harmonic oscillator, the quantum evolution of the states $w[F]\Omega$ 
with $F$ the eigenvalue of the electric field, stays {\it exactly on the 
classical trajectory}. The mathematical reason fo this is the same as 
for coherent states, namely that the Hamiltonian flow preserves the 
polarisation: For coherent states, holomorphic functions of 
$z=q-ip$ are preserved while here functions of $E$ are preserved.        

Furthermore, (\ref{6.34}) states that we have access to the 
{\it non-perturbative scattering matrix} in this interacting 
quantum field theory. Its matrix elements are exactly computable 
\be \label{6.37}
<w[F]\Omega,\; U(t)\; w[F']\Omega>=\delta_{F,(e^{t X_M} K)(F')}
\ee
To interpret this model physically, note that in the given gauge 
the inverse density weight two metric is explicitly given by 
\be \label{6.38}
Q^{ab}=\left( \begin{array}{ccc}
1+[E^x]^2 & E^x\; E^y & E^x\; E^z \\
E^y\;  E^x & 1+[E^y]^2 & E^y \;E^z \\
E^z\;  E^x & E^z\; E^y & [E^z]^2 \\
\end{array}
\right)
\ee
We write $E^I=h^I,\; I=1,2$ and $E^z=1+h^z$ with 
\be \label{6.39}
h^z(x,y,z)=-\int_{-\infty}^z\; ds\; (\partial_I h^I)(x,y,s)
\ee
assuming that $h^I$ vanish at infinity
so that $h^x,h^y$ are two independent polarisations of the metric
which could be called ``gravitons''. To 
first order in $h^I$ we have $Q^{ab}=\delta^{ab}
+2h^I \delta_I^{(a}\delta^{b)}_z +2 h^z \delta^a_z\delta^b_z$. 
Thus, we may interpret the matrix elements of the scattering matrix 
as graviton propagation or metric perturbation amplitudes. 
The fact that the matrix elements only take values $0,1$ between 
the Weyl states $w[F]\Omega$ looks strange at first but the fact that the 
$w[F]\Omega$ are similar to semiclassical coherent states in a Fock space 
makes this look less strange: Consider the coherent states for 
free Maxwell theory on Minkowski space given by 
\be \label{6.40}
\psi_Z=\exp(-\frac{1}{2}\int\; d^3x \; \delta_{ab} \bar{Z}^a Z^b)\;
\exp(\int\;d^3x Z^a C_a^\ast)\Omega_F,\; 
C_a=(\omega^{1/2}\; A_a-i \omega^{-1/2} E^a)/\sqrt{2},\omega^2=-\Delta 
\ee 
where $\partial_a Z^a=0$ is a complex divergence free vector field and 
$\Omega_F$ is the Fock vacuum with $C_a \Omega_F=0$ similar to our 
$E^a \Omega=0$. The normal ordered 
Maxwell Hamiltonian like ours can be expressed 
just in terms of electric and magnetic field
$H_M=\int\; d^3x \delta_{ab}\;: (E^a E^b+B^a B^b):$. Then it is well known 
and can be verified also immediately that 
$e^{-itH_M} \psi_Z=\psi_{e^{t X_{H_M} Z}},\; e^{tX_{H_M}}\; 
Z=e^{it\;\omega}\cdot Z$. Thus the modulus 
of the coherent state scattering amplitude is 
\be \label{6.41}
|<\psi_Z,U(t)\psi_{Z'}>|^2=e^{-\int\; d^3x\; 
[Z^a-e^{it\omega}\cdot Z^{a\prime}]
[Z^a-e^{it\omega}\cdot Z^{a\prime}]^\ast}
\ee
which due to the infinite number of degrees of freedom 
involved is very sharply peaked at $Z=e^{t X_{H_M}} Z'$ and thus is almost 
a Kronecker $\delta$. That it is not exactly a Kronecker $\delta$ is because 
in the Fock represention the Hamiltonian unitaries are weakly continuous.

The analogs of graviton states in our representation are therefore 
{\it not} the $w[F]\Omega$. To obtain the analog of graviton states 
we note that next to $\Omega$ there are many other ground states of the 
Hamiltonian namely $w[F]\Omega$ with $F=$const. This is because 
the Hamiltonian flow on $E^a$ is a first order equation 
$\dot{E}^a={\rm rot}[f(E)]^a$ and while $f_a(E)$ depends algebraically 
on $E$, the rotation operation maps a constant one form to zero.
Thus, given constant initial data, these stay invariant in time as well.
Thus we may pick as the Minkowski state the state 
\be \label{6.42}
\Omega_M:=w[F_M]\Omega,\; F_M^a=\delta^a_z
\ee
which indeed yields $Q^{ab}=\delta^{ab}$ (note that $F^a_M$ is divergence
free). From the point of view of 
$\Omega$, $\Omega_M$ is a highly excited and dynamically stable 
(in fact ground) state.
We consider now Fock states as excitations of $\Omega_M$. Given 
a wave vector $0\not=\vec{k}\in \mathbb{R}^3$ pick two tranversal 
real valued vectors 
$\vec{e}_\lambda(\vec{k}),\;\lambda=1,2$ such that with 
$\vec{e}_3(\vec{k})=\vec{k}/||\vec{k}||$ these three vectors constitute 
a right oriented ONB of $\mathbb{R}^3$ with Euclidian metric. 
Consider the even/odd functions under reflection at the origin 
\be \label{6.43}
F^a_{\vec{k},\lambda,+}(x)=e^a_\lambda(\vec{k})\cos(\vec{k}\cdot \vec{x}),     
F^a_{\vec{k},\lambda,-}(x)=e^a_\lambda(\vec{k})\sin(\vec{k}\cdot \vec{x})     
\ee 
and for ``occupation numbers'' $n\in \mathbb{N}$ the ``multi-particle state''
\be \label{6.44}
\Psi_{\{k,\lambda,\sigma,n\}}=
w[\sum_{l=1}^N n_l\; F_{\vec{k}_l,\lambda_l,\sigma_l}]\;\Omega_M
\ee
with pairwise different $\vec{k}_1,..,\vec{k}_N$. Then
\ba \label{6.45} 
&& <\Psi_{\{k,\lambda,\sigma,n\}},\;\Psi_{\{k',\lambda',\sigma',n'\}}>
=\delta_{
\sum_{l=1}^N n_l\; F_{\vec{k}_l,\lambda_l,\sigma_l}
,\sum_{l=1}^{N'} n'_l\; F_{\vec{k}'_l,\lambda'_l,\sigma'_l}}
\nonumber\\
&=& \delta_{N,N'}\; \sum_{\pi\in S_N}\; \prod_{l=1}^N\;
\delta_{(\vec{k}_{\pi(l)},\lambda_{\pi(l)},\sigma_{\pi(l)}),
(\vec{k}'_l,\lambda'_l,\sigma'_l)}
\ea
where $S_N$ is the symmetric group in $N$ elements. We have exploited that 
the Kronecker is non-vanishing iff the number of labels are the same 
up to a permutation of the arguments. Thus (\ref{6.45}) looks like 
a Fock space inner product except that plane wave smeared creation 
operators have been replaced by Weyl elements which are Kronecker 
normalised rather than $\delta-$distribtion normalised. If one 
replaces the plane wave smearing by wave packet smearing, the analogy 
gets even closer.

The scattering matrix elements between the ``Fock states'' (\ref{6.44}) is now 
rather non-trivial because the form of the Fock states is not at all
preserved in time and reflects both the interaction and our non-perturbative 
treatment theoreof. This may also be seen as follows. The analog of 
the Wightman N-point functions are the Heisenberg correlators 
\ba \label{6.46} 
&& W_{F_1,..,F_N}(t_1,..,t_N)
:=<\Omega,\;U(t_1)\;w[F_1]\; U(t_1)^{-1}... 
U(t_N)\;w[F_N]\; U(t_1)^{-N}\;\Omega>
\nonumber\\
&=& <\Omega,\;w[\sum_{l=1}^N\; (e^{t_k X_H}\; K)(F_l)]\;\Omega>
\nonumber\\
&=& \delta_{\sum_{l=1}^N\; (e^{t_k X_H}\; K)(F_l),0}
\ea
which cannot be written as a Gaussian in the $F_l$ and thus 
does not correspond to a (quasi-)free state, thereby demonstrating that this 
is {\it an interacting QFT in 3+1 dimensions}. Using the theorems of 
section \ref{s3} we note that we can actually quantise {\it all}
interacting QFT in {\it any} dimension as long as in the Hamiltonian formulation
it is at most linear in momentum. We will come back to this in section
\ref{s10}.
\\
The amplitudes (\ref{6.46}) motivate a path integral formulation 
that we will address in section \ref{s7}. Before we communicate 
another interesting observation.

\section{Non-relational weak quantum Dirac observables}
\label{s8}

In \cite{40} for the SU(2) theory (Euclidian GR) the observation was made 
that the seven constraints $G_j=D_a=C=0$ can be solved by making the 
Ansatz (which is w.l.g. for non-degenerate metrics) 
$B^a_j=\lambda_{jk} \; E^a_k$ and $D_a=0$ requires $\lambda_{[jk]}=0$,
$C=0$ requires $\delta^{jk}\lambda_{jk}=0$ and the Bianchi identity 
${\cal D}_a B^a_j=0$ requires 
$E^a_j {\cal D}_a \lambda_{jk}=0$ when $G_j={\cal D}_a E^a_j=0$ holds where 
${\cal D}_a$ is the covariant derivative of $A$. For the SU(2) theory
it is very 
difficult to solve the condition $E^a_j {\cal D}_a \lambda_{jk}=0$. However,
we may apply the same Ansatz in the U(1)$^3$ theory and find that 
$B^a_j=\lambda_{jk} E^a_k$ solves all constraints when $\lambda_{jk}$ is 
a constant, symmetric, trace free matrix. 

Now consider the question to construct (weak) Dirac observables $O$ that
depend only on $E$. Thus $\{D[u],O\}=\{C[N],O\}=0$ 
for all $u,N$ 
at least on the 
constraint surface $G_j=D_a=C=0$ of the $A,E$ phase space. Note that 
$O$ is trivially invariant under Gauss gauge transformations.
This leads to the 
conditions 
\be \label{8.1}
F_{ab}^j E^b_j=\epsilon_{jkl} F_{ab}^j E^a_k E^b_l=0,\; F_{ab}^j=
2\partial_{[a} \frac{\delta O}{\delta E^{b]}_j}
\ee     
The relation to the above Ansatz becomes now clear because the 
requirement
\be \label{8.2}
\epsilon^{abc} \partial_b \frac{\delta O}{\delta E^c_j}=\lambda_{jk} E^a_k
\ee
with $\lambda_{jk}$ constant, symmetric, tracefree solves (\ref{8.1}) and 
the Bianchi identity when the Gauss constraint holds. 
The question is whether (\ref{8.2}) can be solved 
for $G$. The surprising answer is that it can for any such matrix $\lambda$
which enables us construct {\it an infinite number of pure configuration,
non-relational Dirac observables}.

We claim that on the constraint surface defined by the Gauss constraint
\be \label{8.3}
O_\lambda[E]:=\frac{1}{2}\lambda^{jk}\;\int_\sigma\; d^3x\;\int\; d^3y\;
E^a_j(x)\; \kappa_{ab}(x,y)\; E^b_k(y)
\ee
solves (\ref{8.2}) with the symmetric Green function 
$\kappa_{ab}(x,y)=\kappa_{ba}(y,x)$ 
\be \label{8.4}
\kappa_{ab}(x,y):=-\epsilon_{acb} \; 
\delta^{cd}\;\frac{\partial}{\partial x^d} \;\kappa_\Delta(x,y),\;
\kappa_\Delta(x,y)=-\frac{1}{4\pi||x-y||},\; 
||x-y||^2=\delta_{ab}(x^a-y^a)(x^b-y^b)
\ee
Obviously $\kappa_\Delta$ is the Green function of the Laplace operator 
in flat Euclidian space. The astonishing fact is that $O_\lambda$ 
is (weakly) spatially diffeomorphism invariant even though 
$\kappa_{ab}$ heavily 
depends on the Euclidian background metric $\delta_{ab}$. A similar 
counter-intuitive effect is oserved for knot invariants \cite{41}
which are also constructed using the background dependent Green
function (\ref{8.4}). We have 
\ba \label{8.5}
&& \epsilon^{acb}\frac{\partial}{\partial x^c}\; \kappa_{be}(x,y)
=-[\delta^a_d\;\delta^c_e-\delta^a_e\;\delta^c_d]\; \delta^{df}\;
\frac{\partial^2}{\partial x^c \partial x^f} \; \kappa_\Delta(x,y)
\nonumber\\
&=&-[\delta^{ab}
\frac{\partial^2}{\partial x^b \partial x^e}
-\delta^a_e\;\Delta_x]\; \kappa_\Delta(x,y)
\nonumber\\
&=&[\delta^a_e\delta(x,y)-\delta^{ab} 
(\frac{\partial^2}{\partial x^b \partial x^e}\Delta^{-1})(x,y)
=P^a_{e\perp}(x,y)
\ea
thus $\kappa_{ab}$ is the Green function of the rotation operator 
$\vec{\nabla}\times$ on the space of divergence free vector field densities i.e.
on general vector field densities it yields the transversal projector. 

Indeed, since $\lambda$ is a symmetric tensor and $\kappa_{ab}(x,y)$ is a 
symmetric integral kernel we have 
\be \label{8.5}
\frac{\delta}{\delta E^a_j(x)} O_\lambda=\lambda^{jk}\;
[\kappa_{ab}\cdot \; E^b_k](x)
\ee
Thus by construction 
\be \label{8.6}
[{\rm rot} \frac{\delta O_\lambda}{\delta E^\cdot_j}]^a(x)
=\lambda^{jk}\; (P^a_{b\perp} \cdot E^b_k)(x)
=\lambda^{jk} E^a_k(x)-
[\kappa_{\Delta}\cdot \delta^{ab}\partial_b G_k](x)
\ee  
where $G_k=\partial_a E^a_k$ is the Gauss constraint. It follows 
that $G_\Lambda$ is a weak Dirac observable, more precisely it is 
a strict Dirac observable on the partial constraint surface defined 
by the Gauss constraint only. Note that (\ref{8.5}) of course trivially
obeys the Bianchi identity also away from the Gauss constraint surface.  

Since the constraint equations (\ref{8.1}) are linear in $O$ we can construct
an infinite number of weak Dirac observables as follows: Let
$F:\mathbb{R}^N\to \mathbb{R}$ be any $C^1(\mathbb{R}^N)$ function,
let $\lambda_I,\; I=1,..,N$ be any symmetric, tracefree, constant 
matrices then 
\be \label{8.7}
F[E]:=F(O_{\lambda_1}[E],..,O_{\lambda_N}[E])
\ee
is a strong Dirac observable on the Gauss constraint surface. Note 
that $F$ need not be a polynomial. What is really astonishing is 
that it is generally believed that a spatially diffeomorphism invariant 
and Gauss invariant function of $E^a_j$ should be built from 
curvature invariants of the metric 
$q_{ab}$ where $Q^{ab}=\delta^{jk} E^a_j E^b_k=:\det(q)\;q^{ab}$, 
e.g. as a function of the infinite tower of integrals of the form 
\be \label{8.8}
O_N:=\int\; d^3x\; \sqrt{\det(q)}\;{\rm Tr}(R(q)^N),\; 
R_{ab}\;^{cd}=R_{abef}\; q^{ec} q^{fd}
\ee
which are highly non-polynomial and rather complicated functions of 
$E^a_j$.

However, (\ref{8.7}) is not of this form and even more, it is also 
invariant under the Hamiltonian constraint. An obvious reason 
for this should be that (\ref{8.8}) is exactly invariant under 
spatial diffeomorphisms while (\ref{8.7}) is only when the Gauss 
constaint holds. Of course the space of functions (\ref{8.7}) 
is much smaller than the space of functions of the (\ref{8.8}) as 
the space of symmetric and trace-free matrices is just five dimensional
while (\ref{8.8}) involves infinitely many algebraically independent 
elements (although there are Gauss-Bonnet type relations among them,
i.e. certain linear combiantions yield topological invariants). 

On the Hilbert space spanned by the $w[F]\Omega$ constructed in 
section \ref{s3} the $O_\lambda$ are diagonal with eigenvalue 
$-O_\lambda[F]$. It follows 
\ba \label{8.9}
&&[U[u]^{-1}\; O_\lambda U[u]-O_\lambda]\;w[F]\;\Omega
=-[O_\lambda[(e^{X_u} K)(F)]-O_\lambda[F]]\;w[F]\;\Omega
=-([e^{X_u}-1]\; O_\lambda\circ K)[F]\;w[F]\;\Omega=0
\nonumber\\
&& [U[M]^{-1}\; O_\lambda U[u]-O_\lambda]\;w[F]\;\Omega
=-[O_\lambda[(e^{X_M} K)(F)]-O_\lambda[F]]\;w[F]\;\Omega
=-([e^{X_M}-1]\; O_\lambda\circ K)[F]\;w[F]\;\Omega=0
\nonumber\\
&&
\ea
because we consider divergence free $F$. More in detail, by construction
$X_u O_\lambda, \; X_M O_\lambda$ are both proportional to $G_j$ 
and $X_u G_j\propto G_j, X_M G_j=0$. Thus 
$[e^{X_u}-1]O_\lambda,\; [e^{X_M}-1]O_\lambda \propto G_j$ 
which is then evaluated at $F$ but $G_j[F]\equiv 0$. Thus 
on the span of the solutions of the Gauss constraint, the operator 
$O_\lambda$ and more generally (\ref{8.7}) stronly commutes with the 
constraints and thus preserves the physical Hilbert space constructed in 
section \ref{s5}.

\section{Path integral formulation of the reduced theory}
\label{s7}

We are interested in writing the summation (rather than integral)
kernel of the propagator $U(t)$ as a ``path integral''. Usually at 
this point one performs a Wick rotation in time $t$ to obtain a 
contraction operator $e^{-t H}$ rather than a unitary one $e^{-it H}$
which has better chances to result in a rigorously defined measure on 
the space of field histories. However this is only true if $H$ is bounded 
from below which for our $H$ is far from being the case. Thus Wick 
rotation appears of little use and we thus stick to physical rather than 
imaginary time in what follows. For this reason, our condiderations will 
be largely heuristic.\\
\\
Let ${\cal F}$ be the set of all divergence free vector field densities.
When we write $\sum_F$ we mean $\sum_{F\in {\cal F}}$ in what follows.
Then the $w[F]\Omega,\;F\in {\cal F}$ form an orthonormal basis of 
the non-separable Hilbert space $\cal H$ and we may, given $t$ invoke 
a partition of $[0,t]$ into segments $[kt/N,(k+1)t/N],\;k=0,..,N-1$ 
of length $t/N$ and $N-1$ resolutions of unity 
\ba \label{7.1}
&&<w[F_N]\Omega,\; U(t)\; w[F_0]\Omega>     
=\sum_{F_1,..,F_{N-1}}\; 
\prod_{k=0}^{N-1} \;<w[F_{k+1}]\Omega,\; U(\frac{t}{N})\; w[F_k]\Omega>
\nonumber\\
&=& 
\sum_{F_1,..,F_{N-1}}\; 
\prod_{k=0}^{N-1} \;
\delta_{F_{k+1},(e^{\frac{t}{N} X_H}\; K)(F_k)}
\ea
Note that $\delta_{F,F'}$ really means 
\be \label{7.2}
\delta_{F,F'}=\prod_{x\in\mathbb{R}^3,a=1,2,3}\; 
\delta_{F^a(x),F^{a\prime}(x)}
\ee
where the product over all $a=1,2,3$ is redundant because $F,F'$ are 
divergence free but since these are Kronceker functions rather than 
$\delta$ distributions, the redundant 
factors of unity do not cause any singularity. 

The Bohr measure on the Bohr compactification $\mathbb{R}_B$ 
of the real line 
has the property 
\be \label{7.3}
\mu_B(T_k)=\delta_{k,0},\; T_k:\; \mathbb{R}_B\mapsto \mathbb{C},\;
y\mapsto e^{iky}
\ee
Consider a partition $P_\epsilon$ 
of $\mathbb{R}^3$ into cells $\Box$ of coordinate 
vcolume $\epsilon^3$ and centre $p_\Box$ 
and denote by $\lim_\epsilon$ the limit as the partition 
reaches the continuum. Then 
\ba \label{7.4}
&&\delta_{F,F'}
=\lim_\epsilon \; \prod_{\Box\in P_\epsilon,a}\;
\delta_{\epsilon^3 F^a(p_\Box),\epsilon^3 F^{a\prime}(p_\Box)}
=\lim_\epsilon \; \prod_{\Box\in P_\epsilon,a}\;
\mu_B(T_{\epsilon^3[F^a(p_\Box)-F^{a\prime}(p_\Box)]})
\nonumber\\
&=& \lim_\epsilon \; \mu_B(\prod_{\Box\in P_\epsilon,a}\;
T_{\epsilon^3[F^a(p_\Box)-F^{a\prime}(p_\Box)]})
\nonumber\\
&=:&\mu_B(\exp(<.,F-F'>)=:\int_{{\cal C}} d\mu_B(C) \; 
e^{i\int \; d^3x\; C_a(x)[F^a(x)-F^{a\prime}(x)]}
\ea
where we introduced the product Bohr measure on products of the Bohr line
and $\cal C$ denotes the space of ``Bohr connections''. Thus (\ref{7.1})
becomes
\be \label{7.5}
<w[F_N]\;\Omega,\;U(t)\;w[F_0]\;\Omega>
=\sum_{F_1,..,F_{N-1}}\; \int\; \prod_{l=0}^{N-1} d\mu_B(C^l)\;
\exp(i\sum_{l=0}^{N-1}
\int\;d^3x\;C^l_a(x)[F^a_{l+1}-(e^{\frac{t}{N}H}\; K)^a(F_l)](x))
\ee
Setting $F^a(\frac{lt}{N},x):=F^a_l(x),\;
\frac{t}{N}\;\dot{F}^a_l(\frac{tl}{N},x):=F^a_{l+1}(x)-F^a_l(x),\;
C_a(\frac{lt}{N},x):=C_a^l(x)$
we have due to linearity of $X_H$ in $A$
\be \label{7.6}
\int\;d^3x\;C^l_a(x)[F^a_{l+1}-(e^{\frac{t}{N}X_H}\; K)^a(F_l)](x))
=\frac{t}{N}   
\int\;d^3x\;[C_a \dot{F}^a-H(C,F)](\frac{tl}{N},x)+O((t/N)^2
\ee
To free ourselves from the restriction $\partial_a F^a=0$ when summing over 
$F$ we introduce a Kronecker $\prod_l \delta_{G_l}$ 
and remove by Bohr integration with respect to a zero component $C_0$ 
of the connection. Using these manipulations we find
\be \label{7.7}
<w[F_N]\;\Omega,\;U(t)\;w[F_0]\;\Omega>
=\sum_{F_1,..,F_{N-1}}\; \int\; \prod_{l=0}^{N-1} d\mu_B(C^l)\;
\exp(i\frac{t}{N}\;\sum_{l=0}^{N-1}
\int\;d^3x\;[C_a\dot{F}^a-C_0\; \partial_a F^a -H(C,F)](\frac{lt}{N},x)
\ee
Formally taking the limit $N\to \infty$ keeping final $F_N=F_f=F(t)$ and 
initial $F_0=F_i=F(0)$ fixed and integrating by parts we find 
\be \label{7.8}
<w[F_f]\;\Omega,\;U(t)\;w[F_i]\;\Omega>
=\int\; [d\mu_D(F)]\;\int [d\mu_B(C)]
\exp(-i\int_0^t\; d^3x\; \{[\partial_0 C_a-\partial_a C_0] F^a
+H[C,F]\}
\ee
where $\int\; [d\mu_D(F)]:=\sum_{[F]}$ is the discrete (counting) measure.
The exponent depends on $F^a$ which interpret as the electric field 
$E^a$, the magnetic field or spatial spatial curvature 
$B^a=\epsilon^{abc}\partial_b C_c$ of $C$ and the 
temporal spatial curvature $\partial_0 C_a-\partial_a C_0$. 

Performing the integral over $C_\mu$ 
yields back $\partial_a F^a=0$ and the classical equations of motion 
$\dot{F}^a=\delta H/\delta C_a$ while performing the integral over $E$,
if it could be done explicitly, would yield a pure connection formulation
\cite{27}.

\section{Spin foam model}
\label{s9}
 
As is well known \cite{37,42} the heuristic relation between the rigging map
$\eta$ and the path integral over the full unconsrained phase space 
is as follows:\\
Let $q^a,p_a$ be canonical coordinates on the full phase space, let 
$F_\alpha$ be all first class constraints, let $S_I$ be all second class 
constraints and let $G^\alpha$ be a complete system of gauge fixing 
conditions for the $F_\alpha$. Split 
$(q^a)=(\phi^\alpha,x^I,Q^A),\;
(p_a)=(\pi_\alpha,y_I,P_A)$ where $Q^A,P_A$ are referred to as the true 
degrees of freedom. Then 
\be \label{9.1}
<\eta \psi,\; \eta\psi'>:=\lim_{T\to \infty}\;
\frac{
\int\;[dq\;dp]\;\psi^\ast(\phi_T,Q_T)\;
\psi'(\phi_{-T},Q_{-T})\;\delta[F]\;\delta[S]\;\delta[G]\;
|\det[\{F,G\}]|\;|\det[\{S,S\}]|^{1/2}\; e^{i\int_{-T}^T\;dt p_a\dot{q}^a}
}
{\psi,\psi'\to \Omega_0}
\ee
where $\Omega_0$ is some cyclic reference vector. The notation is 
that square brackets denote the product over all $t\in [-T,T]$ of 
instantaneous quantities which are denoted by round brackets, 
e.g. $\delta[F]=\prod_t \delta(F_t)$ and  
$\delta(F_t)=\prod_x \delta(F_t(x))$.  

In \cite{27} the first and second class classification of all constraints 
following from the covariant action (\ref{2.1}) has been derived so that 
we can identify the above structures for the U(1)$^3$ model.
\begin{itemize}
\item[1.] The unconstrained phase space has configuration variables 
$(q^a)=(A_\mu^j,\; \hat{e}^\mu_j,\;\hat{e}^\mu_0)$ and momentum variables
$(p_a)=(M^\mu_j,\; \hat{I}_\mu^j,\;\hat{I}_\mu^0)$ with spacetime 
indices $\mu,\nu,\rho,..=t,1,2,3$ and $j=1,2,3$. 
The relation between $\hat{e}^\mu$ and 
$e^\mu$ is that $\hat{e}^\mu_\cdot=|\det(e_\mu^\cdot)|^{1/2} \; e^\mu_\cdot$
is a half density.
\item[2.] Let
\be \label{9.2}
\sigma^{\mu\nu}_j:=2\;\hat{e}^{[\mu}_0\; \hat{e}^{\nu]}_j+
\epsilon^{jkl}\;\hat{e}^{[\mu}_k\; \hat{e}^{\nu]}_l
\ee
Primary constraints are  
\be \label{9.2}
T^a_j=M^a_j-\sigma^{ta}_j,\;M^t_j,\; \hat{I}_\mu^j,\;\hat{I}_\mu^0
\ee
with $a,b,c,..=1,2,3$ spatial indices. 
\item[3.] The first nine primary constraints 
are the triad conditions. Then $M^t_j$ and 
seven of the sixteen $\hat{I}_\mu^j,\hat{I}_\mu^0$ are stabilised 
by the secondary constraints
\be \label{9.3}
G_j=\partial_a \sigma^{ta}_j,\;
D_a=F_{ab}^j \hat{e}^b_j,\; C=\epsilon_{jkl}F_{ab}^j \hat{e}^b_k\;
\hat{e}^c_k,\;T_j=\hat{e}^t_j
\ee
with $F_{ab}^j=2\partial_{[a} A_{b]}^j$. Note that $T_j=0$ is the 
``time gauge'' 
which in the U(1)$^3$ theory is a necessary constraint and not a 
convenient gauge fixing condition.  
\item[4.] The Lagrange multiplierst $v_a^j, v_t^j,\hat{v}^\mu_j,\hat{v}^\mu_0$
of the primary constraints get fixed to
\be \label{9.4}
v_a^j=[\hat{e}^t_0]^{-1}\;[F_{ab}^j\hat{e}^b_0+\epsilon_{jkl} F_{ab}^k 
\hat{e}^b_l],\;\hat{v}^t_j=0, 
\hat{v}^a_j=-[\hat{e}^t_0]^{-1}[\partial_b\sigma^{ab}_j+\hat{v}^t_0\hat{e}^a_j]
\ee
to stabilise the remaining nine of the $\hat{I}_\mu^j,\hat{I}_\mu^0$,
the time gauge $T_j$ and the triad conditions $T^a_j$,
while $v_t^j,\hat{v}^\mu_0$ remain free. No tertiary constraints arise.
\item[5.] The 14 first class constraints are given by 
\ba \label{9.5}
&& \hat{I}_\mu^0,\;M^t_j,\; \hat{G}_j=\partial_a M^a_j,\; 
\nonumber\\
\hat{D}_a &=& F_{ab}^j-A_a^j\hat{G}_j+
\sum_{L=0}^3\;\{2\;[
(\hat{I}_a^L \hat{e}^b_L)_{,b}-\hat{I}_{b,a}^L \hat{e}^b_L)_{,b}-  
(\hat{I}_b^L \hat{e}^b_L)_{,a}]+\frac{1}{2}
(\hat{I}_t^L \hat{e}^t_{L,a}-\hat{I}_{t,a}^L \hat{e}^t_L)\}
\nonumber\\
\hat{C} &=& v_t^j\;M^t_j+v_a^j T^a_j+\sum_L \hat{v}^\mu_I \hat{I}_\mu^L
-A_t^j \partial_a\sigma^{ta}_j-\frac{1}{2} F_{ab}^j \sigma^{ab}_j
\ea
where in the last equation the above fixed values for 21 of 28 Lagrange 
multipliers have to be assumed. The 12 second classs pairs are 
\be \label{9.6}
\{\hat{I}_t^j(x),T_k(y)\}=\delta^j_k \delta(x,y),\;
\{T^a_j(x),\hat{I}_b^k(y)\}=\delta^a_b\delta_j^k\hat{e}_0^t(x)\delta(x,y)
\ee
\end{itemize}
The unconstrained phase space thus has 2(12+16)=56 degrees of freedom. The 
14 first class constraints and 24 second class constraints remove 
14+14+24=52 degrees of freedom leaving 4 physical degrees of freedom, 
i.e. 2 canonical pairs. We thus identify the canonical structure as follows:
\begin{itemize}
\item[1.] First class set
\be \label{9.7}
\{F_\alpha\}=\{\hat{I}_\mu^0,\; M^t_j,\;\hat{G}_j, \; \hat{D}_a,\;\hat{C}\}
\ee
where the latter 7 constraints are solved for $M^{a\parallel}_j,A_{a\perp}^j$.
\item[2.] Corresponding gauge fixing conditions 
\be \label{9.8}
\{G^\alpha\}=\{\hat{e}^\mu_0,\;A_t^j,\; A^j_{a\parallel},\;M^{a\perp}_{1,2}\}
\ee
\item[3.] Second class set
\be \label{9.9}
\{S_I\}=\{(T_j,\hat{I}_t^j),(\hat{I}_a^j,\;T^a_j)\}
\ee
which are solved for $\hat{e}^t_j,\;\hat{I}^j_t,\;\hat{I}_a^j,\; \hat{e}^a_j$.
\item[4.] The split of canonical pairs is thus
\ba \label{9.10}
\{(\phi^\alpha,\pi_\alpha)\} &=& 
\{(\hat{e}^\mu_0,\hat{I}_\mu^0),\;(A_t^j,M^t_j),\;
(A_{a\parallel}^j,M^{a\parallel}_j),\;(A_{a\perp}^{1,2}, M^{a\perp}_{1,2})\}
\nonumber\\
\{(x^I,y_I)\} &=& 
\{(\hat{e}^t_j,\hat{I}^j_t),\;(\hat{e}^a_j,\hat{I}_a^J)\}
\nonumber\\
\{Q^A,P_A\} &=& \{(A_{a\perp}^3,M^{a\perp}_3)\}
\ea
\end{itemize}
The aim is now to rewrite (\ref{9.1}) in a covariant form, i.e. we 
aim at keeping only the variables $A_\mu^j,\hat{e}^\mu_j,\hat{e}^\mu_0$ 
that appear in the classical action (\ref{2.1}). Thus we want 
to get rid of $M^\mu_j,\hat{I}_\mu^j,\hat{I}_\mu^0$.

First we note that 
\be \label{9.11}
\det[\{S,S\}]=[(\hat{e}^t_0)^9],\;
\det[\{F,G\}]=\det[\{F_s,G_s\}],\; \det[\{F_p,G_p\}]=1
\ee
where $F_s,G_s$ denote the list of secondary first class constraints and 
gauge fixing conditions only, i.e. 
$\{F_s\}=\{\hat{G}_j,\hat{D}_a,\hat{C}\}$ and 
$\{G_s\}=\{A_{a\parallel}^j,M^{a\perp}_{1,2}\}$ while
$F_p,G_p$ denote the list of primary first class constraints and 
gauge fixing conditions only, i.e. 
$\{F_p\}=\{\hat{I}_\mu^0,M^t_j\}$ and 
$\{G_p\}=\{\hat{e}^\mu_0,A_t^j\}$ 

The Liouville measure $[dF_p dG_P]$ is invariant under canonical 
transformations (at each time) and the symplectic potential 
$F_P \;dG_p$ changes by a an exact differential, hence its time integral
is invariant for canonical transformations that decay at vearly early and 
late times. Therefore, by the Fadeev-Popov method we can drop 
$\delta[G_p]\;|[\det\{F_p,G_p\}]|$ from the numerator and denominator path 
integral and we can then carry out the integral $\int [dF_p]\; \delta[F_P]=1$
Thus altogether we can drop $[dF_p] \; \delta[G_p]\;\delta[F_p]\;
|\det[\{F_p,G_p\}|$. 

Next, $Q_{\pm T}$ is invariant under asymptotically trivial gauge 
transformations generated 
by $F_s$ so we can identify $Q$ with the $F_s$ gauge invariant projection
or relational obeservable $O_Q:=O_Q^{G_s,F_s}$ corresponding to the 
gauge fixing $G_s$. Therefore  
\be \label{9.12}
B_T(O_Q):=\psi^\ast((O_Q)_T)\psi'((O_Q)_{-T}),\;
B^0_T(Q):=\Omega_0^\ast((O_Q)_T)\Omega_0((O_Q)_{-T})
\ee
are gauge invariant. By the same Fadeev-Popov argument as above, 
we can thus drop $\delta[G_s]\;|\det[\{F_s,G_s\}|$ from numerator and 
denominator path integral.

Furthermore, we can carry out the interals over 
$\hat{I}_\mu^j,\hat{I}_\mu^0,M^\mu_j$ enforcing $\hat{I}_\mu^j=0,\;
M^t_j=0,\; M^a_j=\sigma^{ta}_j$. This simplifies (\ref{9.1})
to  
\be \label{9.13}
\lim_{T\to \infty}\;
\frac{
\int\;[dA_\mu^j\;d\hat{e}^\mu_j\;d\hat{e}^\mu_0]\;
\delta[(F_s)_{T^a_j=0}]\;\;\delta[T_j]\; [|\hat{e}^t_0|^{9/2}|]
\exp(i\int_{-T}^T\; dt\; \sigma^{ta}_j\; dA_a^j)\; B_T(Q)
}
{ B_T(Q)\to B_T^0(Q)}
\ee
We note that the secondary first class constraints at $T^a_j=\hat{P}_\mu^L
=M^t_j=0$ simplify to 
\be \label{9.14}
\hat{G}_j=\partial_a\sigma^{ta}_j,\;
\hat{D}_a=F_{ab}^j \sigma^{ta}_j-A_a^j\hat{G}_j,\;
\hat{C}=-A_t^j\; \hat{G}_j-\frac{1}{2} F_{ab}^j\;\sigma^{ab}_j
\ee
We bring the constraints $F_s$ into the exponents using corresponding 
Lagarange multipliers $\lambda^j,\lambda^a,\lambda$ 
which turns (\ref{9.1}) into the form
\be \label{9.15}
\lim_{T\to \infty}\;
\frac{
\int\;[dA_\mu^j\;d\hat{e}^\mu_J\;d^7\lambda]\;
[|\hat{e}^t_0|^{9/2}]\; \delta[\hat{e}^t_j]
\exp(i\int_{|t|\le T}\;d^4x\;  
[\sigma^{ta}_j\; \dot{A}_a^j+[\lambda^j-\lambda^a A_a^j+A_t^j]\hat{G}_j+
[\lambda^a\sigma^{tb}_j+\frac{\lambda}{2} \sigma^{ab}_j] F_{ab}^j)\; B_T(Q)
}
{ B_T(Q)\to B_T^0(Q)}
\ee
We shift $A_t^j+\lambda^j-\lambda^a A_a^j\to \tilde{A}_t^j$ and drop the 
tilde again. Then all dependence of the integrand on $\lambda^j$
has disappeared, yielding an infinite constant which drops out of the fraction
leaving us with 
\be \label{9.16}
\lim_{T\to \infty}\;
\frac{
\int\;[dA_\mu^j\;d\hat{e}^\mu_J\;d\lambda^a\;d\lambda]\;
[|\hat{e}^t_0|^{9/2}]\; \delta[\hat{e}^t_j]
\exp(i\int_{-T}^T\; dt\;\int\; d^3x\;  
[\sigma^{ta}_j\; F_{ta}^j+
[\lambda^a\sigma^{tb}_j+\frac{\lambda}{2} \sigma^{ab}_j] F_{ab}^j)\; B_T(Q)
}
{ B_T(Q)\to B_T^0(Q)}
\ee
where we integrated by pars in the exponent.
Since the support of the integrand is at $\hat{e}^t_j=0$ this
reduces $\sigma^{\mu\nu}_j$ to
\be \label{9.17} 
\sigma^{ta}_j
=\hat{e}^t_0\hat{e}^a_j,\;
\sigma^{ab}_j=2\hat{e}^{[a}_0 \hat{e}^{b]}_j+\epsilon_{jkl} 
\hat{e}^a_k\; \hat{e}^b_l
\ee
We perform a field redefinition as follows
\ba \label{9.18}
&& \lambda=\lambda',\; \lambda^a=\lambda^{a\prime}
\nonumber\\ 
&& \hat{e}^t_j=\hat{f}^t_j,\;
\hat{e}^a_j=\hat{f}^a_j/\sqrt{\lambda'},\; 
\hat{e}^t_0=\sqrt{\lambda'}\; f^0_t,\;
\hat{e}^a_0=
[\hat{f}^a_0-\sqrt{\lambda'}\; \lambda^{a\prime}\;\sqrt{f}^0_t]/\lambda'
\ea
whose pointwise Jacobean is easily computed to be $(\lambda')^{-7}$. Note
also $[|\hat{e}^t_0|^{9/2}]=[|\hat{f}^t_0|^{9/2}]\;[|\lambda'|^{9/2}]$.
After this transformation, as one can check,
the exponent no longer depends on $\lambda',
\lambda^{a\prime}$, all dependence on those
is a fixed function multiplying the integrand which can thus be integrated 
out and yield cancelling 
factors. Relabelling $\hat{f}$ by $\hat{e}$ again we find that (\ref{9.1})
becomes
\be \label{9.19}
\lim_{T\to \infty}\;
\frac{
\int\;[dA_\mu^j\;d\hat{e}^a_j]\;d\hat{e}^\mu_0]\;
[|\hat{e}^t_0|^{9/2}|]\; \delta[\hat{e}^t_j]
\exp(\frac{i}{2}\int_{-T}^T\; dt\;\int\; d^3x\;  
F_{\mu\nu}^{KL} \; \hat{e}^\mu_K\;\hat{e}^\nu_L)
\; B_T(Q)
}
{ B_T(Q)\to B_T^0(Q)}
\ee
which is almost what one would expect (i.e. the integrand is the 
exponent of the classical action) except for the measure factor 
$[|\hat{e}^t_0|^{9/2}]$ and the fact that the integral is supported 
on the time gauge $\hat{e}^t_j=0$.\\ 
\\
Expression (\ref{9.19}) has an exponent linear in the connection 
and thus integrating over it yields an integral over $\hat{e}^\mu_L$ 
supported at the solution of its classical field equation. However,
to test techniques of spin foam models \cite{43} in this Abelian U(1)$^3$ 
gauge group theory one rather aims at a BF formulation of (\ref{9.19})
rather than a tetrad formulation. This will bring out the significance 
of the time gauge: The imposition of the tetrad time gauge translates
into a {\it covariant simplicity constraint} on the $B$ field.
To see this we note that 
\be \label{9.20}
F_{\mu\nu}^{KL} \hat{e}^\mu_K\;\hat{e}^\nu_L
=F_{\mu\nu}^j\;[2\hat{e}^\mu_0\;\hat{e}^\nu_j
+\epsilon_{jkl}
\hat{e}^\mu_k\;\hat{e}^\nu_l]
=:F_{\mu\nu}^j\;\sigma^{\mu\nu}_j
\ee
i.e. the $B$ field of the BF formulation of the U(1)$^3$ model is 
constrained to be of the form
\be \label{9.21}
B^{\mu\nu}_j=\sigma^{\mu\nu}_j
\ee
To avoid confusion, note that this $B$ field has nothing to do with the 
magnetic field of $A$.
The left hand side has 18 degrees of freedom, the right hand side only 
13 with the time gauge imposed, thus there must be 5 simplicity 
constraints on $B$ that ensure that it has the form (\ref{9.21}). To discover
those, we try to solve (\ref{9.21}) for $\hat{e}^\mu_L$ while 
$\hat{e}^t_j=0$. We have 
\be \label{9.22}
B^a_j:=B^{ta}_j=\hat{e}^t_0\; \hat{e}^a_j,\;
B^{ab}_j=2\hat{e}^a_{[0} \hat{e}^b_{j]}+\epsilon_{jkl} \hat{e}^a_k\hat{e}^b_l
\ee
Eliminating $\hat{e}^a_j$ from the second equation via the first,
dualising $B^{ab}_j:=\epsilon^{abc}\;\tilde{B}_c^j$ and using the 
formal inverse $B_a^j B^a_k=\delta^j_k$ the second equation in 
(\ref{9.22}) can be written as 
\be \label{9.23}
(\hat{e}^t_0)^2\; \tilde{B}_a^j=\hat{e}^t_0\; \epsilon_{abc} \hat{e}^b_0
B^c_j+\det(B)\;B_a^j
\ee
and contracting with $B^a_k$ finally yields
\be \label{9.23}
(\hat{e}^t_0)^2\; \tilde{B}_a^j\; B^a_k=
\det(B)\;[\delta_{jk}+\hat{e}^t_0\; \epsilon_{jkl} \hat{e}^b_0\; B_b^l]
\ee
These are 9 equations for 4 unknowns $\hat{e}^\mu_0$ and they can be solved 
if and only if the matrix $M_{jk}:=\tilde{B}_a^j B^a_k$ has no symmetric trace 
free piece (besides $B^a_j$ being non-degenerate, which is classically
granted if $\hat{e}^\mu_I$ is a half densitised tetrad given the first 
relation in (\ref{9.22})). If that is the case we find 
\be \label{9.24}
\hat{e}^t_0=\frac{3\det(B)}{\delta^{jk} M_{jk}},\;
\hat{e}^a_0=\frac{\hat{e}^t_0\;\epsilon^{jkl} M_{kl}}{\det(B)} B^a_j
\ee
Surprisingly the condition that the tracefree symmetric part of 
$M_{jk}$ vanishes can be stated {\it covariantly}. To see this we 
compute the covariant density
\be \label{9.25}
S_{jk}(B):=\epsilon_{\mu\nu\rho\sigma} \; B^{\mu\nu}_j \;B^{\rho\sigma}_k
=4\epsilon_{abc}\; B^{ta}_{(j} B^{bc}_{k)}     
=8\epsilon_{abc}\; M_{(jk)}
\ee
which directly yields the symmetric piece of $M$. Thus the five  
U(1)$^3$ {\it simplicity constraints} read
\be \label{9.26}
S(B)-\frac{1_3}{3}\; {\rm Tr}(S(B))=0     
\ee
Conversely, inserting (\ref{9.21}) with time gauge installed into 
$S(B)$ yields that 
\be \label{9.27}
S(B)=6\; \hat{e}^t_0\; \det(\{\hat{e}^a_j\}) \; 1_3
\ee
has only a trace part. 

To rewrite the rigging map or path integral (\ref{9.19}) from 
the tetrad into the BF formulation we thus proceed as follows:
We define the map from the 18 dimensional space of tensors
$B^{\mu\nu}_j$ 
to the 5 dimensional space of trace free symmetric matrices
\be \label{9.28}
B\mapsto S_T(B):=S(B)-1_3\;{\rm Tr}(S(B))/3
\ee
and we use the first equation in (\ref{9.22}) and (\ref{9.24}) to 
define a map from tensors $B^{\mu\nu}_j$ to a 13 dimensional space
of tetrads in time gauge 
\be \label{9.29}
B\mapsto \hat{E}(B); \hat{E}^t_j\equiv 0  
\ee
Note that (\ref{9.29}) just uses the trace and antisymmetric part of 
$B^a_j \tilde{B}_{ak}$ while $S_T(B)$ is the trace free symmetric part.

Therefore (\ref{9.28}), (\ref{9.29}) defines an invertible map 
$B\mapsto (S_T(B), \hat{E}(B))$ whose inverse we write as 
$(M,\hat{e})\mapsto b(M,\hat{e})$. Then we determine a measure 
factor $\rho(B)$ such that 
\ba \label{9.30}
&&\int\; d^{18}B\; \rho(B)\;\delta^{(5)}(S_T(B)) \; f(B)
=
\int\; d^{13}\hat{e}\;d^5M \;
|\det(\partial(b(M,\hat{e})/\partial(M,\hat{e})|\; \delta^5(M)\;
\rho(b(M,\hat{e}))\; f(b(M,\hat{e}))
\nonumber\\
&=& 
\int\; d^{13}\hat{e}\;
|\det(\partial(b(M,\hat{e})/\partial(M,\hat{e})|_{M=0}\;
\rho(b(0,\hat{e}))\; f(b(0,\hat{e}))
\nonumber\\
&=& 
\int\; d^{13}\hat{e}\;|\hat{e}^t_0|^{9/2}\; f(b(0,\hat{e})
\ea
where the identity $S_T(b(M,\hat{e})=M$ was used, thus 
\be \label{9.31}
\rho(b(0,\hat{e}))=
\frac{|\hat{e}^t_0|^{9/2}}{
|\det(\partial(b(M,\hat{e})/\partial(M,\hat{e})|_{M=0}}=:J(\hat{e})
\ee
Thus for instance
\be \label{9.32}
\rho(B):=J(\hat{E}(B))
\ee
where the identity $\hat{E}(b(M,\hat{e}))=\hat{e}$ was used. The details
are reserved for future investigations. The investigations of section 
\ref{s7} suggest that the naive Lebesgue measures $dA, d\hat{e}, dB$ 
should rather be replaced by the Bohr $d\mu_B(A)$ and discrete 
$d\mu_D(\hat{e})\; d\mu_D(B)$ meeasures respectively.\\
\\
Thus we have brought the U(1)$^3$ model into the starting point for the 
usual spin foam treatment using the simplicity constraint (\ref{9.26}).
It is possible that the simpler Abelian context and the fact that the 
canonical treatment could be carried out in all details can help 
to deepen the relation between canonical and covariant LQG. 

\section{Summary and Outlook}
\label{s10}

That quantum U(1)$^3$ theory can be developed to such an extent as 
has been layed out in this paper may seem astonishing at first. In 
retrospect, however, there is a single and simple mathematical reason 
for this: The constraints or physical Hamiltonian of the theory are 
at most linear in one of the variables of a canonical pair. This
makes the system behave almost classically in the polarisation 
(in the sense of geometric quantisation \cite{44}) adapted to this partly 
linear structure. 

More realistic theories such as GR in either signature
are typically at least quadratic in both variables of a canonical 
pair and therefore this remarkable simplicity of the quantum solution of 
U(1)$^3$ theory cannot be expected fo such theories. Yet, the following 
lessons most likely can be transferred to those more ralistic theories:
\begin{itemize}
\item[1.] {\it Exponentiated vs. infinitesimal action}\\
It is the action of the {\it exponentiated} Hamiltonian constraint, and not 
its generator, that can be implemented without anomalies 
on the kinematical Hilbert space.
In LQG, that was already known for the much simpler spatial diffeomorphism 
constraint.
For the Hamiltonian constraint the situation was less clear because while 
for the spatial diffeomorphism constraint there is one ``missing smearing 
dimension'', for the Hamiltonian constraint the number of smearing 
dimensions is precisely correct in its density unity 
version. The more fundamental reason for 
failure of existence of the generator of the anomaly free action of the 
Hamiltonian constraint is weak discontinuity. 
\item[2.] {\it ``Off-shell'' kinematical vs. dual action}\\
The action of the Hamiltonian constraint can be defined directly 
on the kinematical Hilbert space, it does not require any dual 
states (distributions) or non standard operator toplogy relying on 
spatial diffeomorphism invariance. The 
algebroid closes off-shell in that sense. 
\item[3.] {\it Quantum non-degeneracy}\\
The natural and, for non-integer density weight and/or density weight less 
than two necessary, common, dense 
and invariant domain on which the quantum hypersurface deformation algebroid 
acts consists of states which are quantum non-degenerate, i.e. the volume 
operator of any open region has non-zero volume expectation 
(even eigen-)value. This is not only semiclassically expected as the 
very definition of the hypersurface deformation algebroid and its 
derivation (the quite non-trivial Poisson bracket calculation makes 
crucial use of classical non-degeneracy at every single step of the 
computation) critically rely on npn-degeneracy, in retrospect, this 
also explains why in LQG an anomaly free quantum constraint algebroid 
is so difficult to obtain: The spin network states (SNWS) which one uses as 
common, dense and invariant domain fail to be quantum 
non-degenerate. Therefore the quantum analogs of substantial 
parts of the classical Poisson bracket
calculation in terms of commutators which depend on integrals (understood 
as infinite Riemann sums) and
integrations by parts are impossible to reproduce on SNWS which can at 
most depict finitely many terms of the actually necessary infinite 
number of terms in the Riemann  
sums. The exploration of a possible quantum non-degenerate domain within 
the standard LQG setiing is currently under study \cite{30}. 
\item[4.] {\it Density weight}\\
In the U(1)$^3$ model one can work with any density weight on the 
non-degenerate domain, thereby showing that the natural density weight unity 
is not ruled out by the quantum theory. In more realistic theories 
with at least quadratic dependence in all canonical variables, density weight
unity is in fact the only choice \cite{30}. 
\item[5.] {\it Detailed action of the Hamiltonian constraint}\\
The Hilbert space representation of LQG and the one 
used here for the U(1)$^3$ are in some sense very similar as they are 
both of the discontinuous Narnhofer-Thirring type. In both representations
one can solve the SU(2) or U(1)$^3$ Gauss constraint by considering 
appropriate subspaces defined by a restriction on the smearing functions 
of the connection (``form factors'').

However, in LQG the connection is smeared along 1d paths while here we use 
a 3d smearing. This is possible because the constraints are linear in the 
connection, the quadratic dependence on the connection in the non-Abelian 
case enforces to smear in 1d only \cite{30}. 
Therefore in LQG the Hamitonian constraint acts by attaching loops to a graph 
in the vicinity of a vertex while here the Hamiltonian acts in a sense 
everywhere as the state is non-degenerate. Their action is therefore 
difficult to compare. To examine the closer relation between the two,
we consider the U(1)$^3$ model but {\it exactly} quantised as in LQG
\cite{45}, i.e. in terms of charge networks along 1d graphs rather 
than 3d smearings. Such a charge network smears the connection 
with a form factor 
\be \label{10.1}
(F_{{\rm CNW}})^a_j(x)=\sum_e\; n_e^j \; \int_e dy^a\; \delta(x,y)
\ee
excited on a graph $\gamma$ with edges $e$, charges $n_e^j\in \mathbb{Z}$ 
which 
solve the Gauss constraint $\sum_{b(e)=v} \vec{n}_e=\sum_{f(e)=v} \vec{n}_e$
where $b(e),f(e)$ denote the beginning/final point of an oriented edge 
$e$ and $v$ is a vertex of $\gamma$. 

To obtain a closely related form factor in the present treatment of the 
theory we simply mollify the $\delta$ distribution (see also 
\cite{46} for a similar procedure in linearised gravity)
\be \label{10.2}
(F_\epsilon)^a_j(x)=\sum_e\; n_e^j \; \int_e dy^a\; \delta_\epsilon(x,y)
\ee
where $\epsilon\to \delta_\epsilon$ is a sequence of positive, smooth 
functions of rapid decrease that converge to the $\delta$ distribution 
in the topology of the tempered distributions on $\mathbb{R}^3$, e.g.
a Gaussian. Then it is easy to check that 
\be \label{10.3}
\partial_a (F_\epsilon)^a_j=-\sum_v\;\delta_\epsilon(v)\;
[\sum_{f(e)=v}\; n^j_e-\sum_{b(e)=v}\; n^j_e]=0
\ee
is indeed divergence free and 
\be \label{10.4}
\det(\{F_\epsilon(x)\})=\sum_{e_1,e_2,e_3}\;
\det(\vec{n}_{e_1},\vec{n}_{e_2},\vec{n}_{e_3})\;
\det(F_{\epsilon,e_1},F_{\epsilon,e_2},F_{\epsilon,e_3})(x);\;\;
F^a_{\epsilon,e}(x)=\int_e\; dy^a\; \delta_\epsilon(x,y)
\ee
As $\delta_\epsilon(x,y)$ is sharply peaked at $x=y$ the smooth function
$F^a_{\epsilon,e}(x)$ is essentially supported on $e$ decaying rapidly 
away from it. This shows that to get a non-degenerate $F_\epsilon$ the 
grid $\gamma$ should fill $\mathbb{R}^3$. Such states are outside the 
completion of the CNW states but are included in the present theory. 
However, leaving this issue aside, for $x$ in the vicinity of the 
graph (\ref{10.4}) is dominated by triples of edges $e_1,e_2,e_3$
intersectiong in a vertex $v$. Then for instance for a cubic grid, 
the non-degeneracy is satisfied for suitable charges. Namely if 
we denote the edges of the grid by $e_I(v)$ where $I=1,2,3$ denotes the 
direction and $v-\delta_I$ denotes the vertex shifted from $v$ by one 
lattice unit to the left in driection $Y+I$
then (\ref{10.4}) is dominated by
\be \label{10.5}
\det(
\vec{n}_{e_1(v)}-\vec{n}_1(v-\delta_1),
\vec{n}_{e_2(v)}-\vec{n}_2(v-\delta_2),
\vec{n}_{e_3(v)}-\vec{n}_3(v-\delta_3))\;
\det(F_{\epsilon,e_1(v)},F_{\epsilon,e_2(v)},F_{\epsilon,e_3(v)})(v)
\ee

In the current quantum theory, the action of a single 
Hamiltonian constraint could be considererd as changing the 
``charge network label'' (or form factor) $F^a_j$ to 
\be \label{10.5}
F^a_j+[(X_M K)(F)]^a_j=F^a_j+\Delta F^a_j=F^a_j
+\epsilon_{jkl} \partial_b(M F^b_k F^a_l |\det(F)|^{(\delta-2)/2})        
\ee
where $\delta$ is the density weight used. Expanding (\ref{10.5}) we find
\be \label{10.6}
\Delta F^a_j=\sum_{e_1,e_2}\; \epsilon_{jkl}\ n^k_{e_1} n^l_{e_2}\;
[\partial_b\;(M F^b_{\epsilon,e_1} F^a_{\epsilon,e_2})
|\det(F_\epsilon)|^{(\delta-2)/2})](x)
\ee
where the determinant factor is determined by (\ref{10.4}). Formula
(\ref{10.6}) has the following features:\\
\\
i. For $\delta>0$ (\ref{10.6}) is {\it still concentrated on the  
support of $F$}, i.e. ``the graph is not changed, no loop is 
created''. \\
ii. Rather the charges have changed in a 
very non-linear and even non-algebraic way as there are derivatives 
involved and they change {\it everywhere along the graph}, not only in the 
vicinity of a vertex, although its action in the vicinity 
of the vertex is strongest. As more and more actions are included, the 
derivatives will also eventually move {\it the whole graph}. \\
iii. The lapse function {\it has become 
part of the new form factor}. This is similar to the finite action of the 
spatial diffeomorphisms which map graphs to diffeomorphic images and 
in this sense also become part of the form factor.\\
iv. The whole action of the constraint consists in a change of the 
form factor, that is, it does not map a state $w[F]\Omega$ into a linear
combination of several such states but simply a {\it single state}
$w[F']\Omega$.\\
\\
By contrast, in the current implementation of the Hamiltonian constraint
in LQG we have:\\
\\
i. The constraint acts only in the vicinity of vertices and does 
change the graph there.\\
ii. The charges are only changed in the vicinity of a vertex and 
more and more actions only change the graph ever more closely to 
a vertex.\\     
iii. The lapse function evaluated at vertices 
is a coefficient in an expansion of CNW functions 
and does not get part of the form factor.\\
iv. The image of the constraint, even at a single vertex, is a non-trivial 
expansion of CNW functions with coefficients that depend on inverrse 
volume operator factors.\\
\\
Accordingly, the current paper suggests that the quantisation of the 
Hamiltonian constraint in LQG be changed following the above rules. 
Basically, one should try to exponentiate the Hamiltonian constraint.
This is a non-trivial technical challenge because of the non-Abelian 
gauge group and because the limit $\epsilon\to 0$ of (\ref{10.6}) (returning
to the sharp rather than mollified graph) is singular. However, for 
theories with quadratic dependence on connections, smearing just in 1d
cannot be avoided. This points again to the emphasis on the non-degenerate 
sector of the theory \cite{30}.
\item[6.] {\it Interacting QFT}\\
As a by-product of the present work we have shown:\\
For any spacetime dimension
$1+D$ a field theory on $\mathbb{R}\times \sigma$ ($\sigma$ any D-manifold)
whose classical Hamiltonian $H$ is at most
linear in the momenta $\pi_I$ of the fields $\phi^I$ such that 
both $H,V$ vanish at $\phi^I=0$ where $V$ is the potential of $H$, 
admits a 
quantisation in terms of its 
1-parameter unitary group $t\mapsto U(t)$ in a representation 
of the Weyl algebra generated by $\phi^I,\pi_I$ of Narnhhofer Thirring type
with vacuum $\phi^I\Omega=0$ cyclic for the Weyl operators 
$w[F]=\exp(-i \pi[F]),\;\pi[F]:=\int\; d^Dx\; F^I\; \pi_I$ with suitable 
test functions $F^I$. It is given by 
\be \label{10.7}
U(t)\; w[F]\Omega=e^{-i\alpha_t[F]}\; w[F_t]\;\Omega,\;
\alpha_t[F]=\int_0^t\; ds\; V(F_s),\;F_t=(e^{t\;X_H}\; K)(F)
\ee
where $X_H$ is the Hamiltonian vector field of $H-V$
and where $K^I(\phi,\pi)=\phi^I$ denotes the $I-th$ configuration 
coordinate function. The scalar product is specified by 
$<\Omega,w[F]\Omega>=\delta_{F,0}$. The dependence of $H,V$ on $\phi$ 
can be arbitrarily non-linear, even non-polynomial and can be 
background metric independent. The classical 
Legendre transform of $H$ with respect to $\pi$ is singular (just 
yielding the classical equation of motion for $\phi$), but 
with respect to $\phi$ yields an arbitrarily non-linear Lagrangian 
in terms of $\pi$ and its first time derivative $\dot{\pi}$. The 
resulting N-point functions of this QFT (i.e. vacuum expectation values 
of monomials) of the time translates 
$U(t)\; w[F]\; U(-t)$ are not Gaussians in the test functions  
and thus define an interacting QFT in that sense. Among the drawbacks of 
these QFT's are: The Hilbert space underlying this 
representation of the Weyl algebra is non-separable, the unitary group
is weakly discontinuous, the Hamiltonian is not bounded from below, 
the Lagrangean generically is not Poinvar\'e invariant.    
\end{itemize}

\end{document}